\pdfoutput=1 
\documentclass{article}
\usepackage{graphicx, float, caption, subcaption} 

\usepackage{amsthm}
\usepackage{amsfonts}
\usepackage{amsmath, amssymb}
\newtheorem{theorem}{Theorem}[section]

\newtheorem{lemma}[theorem]{Lemma}

\theoremstyle{definition}
\newtheorem{definition}{Definition}[section]

\usepackage[numbers]{natbib}

\usepackage{tikz}
\usetikzlibrary{shapes.geometric, arrows}
\usetikzlibrary{positioning}

\tikzstyle{box} = [rectangle, draw, minimum width=3cm, minimum height=1cm, text centered]
\tikzstyle{arrow} = [thick,->,>=stealth]

\usepackage{xcolor}
\usepackage{soul} 
\usepackage{ulem}

\newcommand{\R}{{\mathbb{R}}}
\newcommand{\Z}{{\mathbb{Z}}}
\newcommand{\T}{{\mathbb{T}}}
\newcommand{\rd}{\text{d}}
\newcommand{\N}{{\mathbb{N}}}

\title{Persistence of \(n\)-Species Lotka--Volterra Models with Periodic Pulses}
\author{Eleanor Courcelle\thanks{Department of Mathematics,
Oregon State University,
Corvallis, OR 97331}\and Jane Shaw MacDonald\thanks{Arcadis, Suite 100, 1285 West Pender Street, Vancouver, British Columbia, V6E 4B1, Canada, Email: jane.madonald2@arcadis.com } \and
Swati Patel\thanks{Department of Mathematics,
Oregon State University,
Corvallis, OR 97331\\
Email: patelswa@oregonstate.edu}}

\date{}

\begin{document}
\maketitle

\begin{abstract}
Periodic impulsive interventions arise naturally in the management of biological populations, including chemotherapy, pesticide application, and infectious-disease treatment. We develop general conditions for permanence in \(n\)-species population models subject to periodic multiplicative pulse disturbances. Our main result provides a sufficient condition for permanence in terms of weighted long-term growth rates on a Morse decomposition of the extinction set, explicitly separating the contributions of continuous population dynamics from those of the periodic pulse. To establish this result, we transform the impulsive system into an associated autonomous continuous-time dynamical system and use this correspondence to extend classical permanence theory to periodically pulsed models. We further show that the same conditions imply robust permanence under sufficiently small perturbations to the continuous dynamics, pulse period, and pulse effects. We illustrate the framework with two Lotka--Volterra models motivated by biological control: competition between chemotherapy-sensitive and chemotherapy-resistant cancer cells, and integrated control of an agricultural pest using pesticides and parasitoids. These examples demonstrate how intervention frequency and intensity interact with underlying ecological interactions to determine whether populations coexist or are excluded. Our results provide a general framework for analyzing persistence in ecological systems subject to repeated discrete disturbances.
\end{abstract}

\section{Introduction}
Periodic, impulsive control interventions are common in ecological population management. For example, in cancer treatment,  chemotherapy is often administered in discrete cycles, with each dose sharply reducing the tumor-cell population followed by a treatment-free interval during which both tumor cells and healthy tissues recover \cite{lakmeche_chemo_periodic_2000, ren2017tumour, diovidio_pulsed_cell_pop_2024}. In agriculture, crop pests and pathogens are frequently controlled through repeated pesticide applications at prescribed intervals \cite{mailleret_pulsed_biocontrol_2009, tang_periodic_lv_2002, mermer2021timing}.  For example, field trials in hot pepper production have used periodic pesticide applications every 7–10 days to suppress anthracnose, tobacco budworm, and other pests and diseases \cite{kim2013three}. Similarly, infectious-disease control programs use mass drug administration (MDA) to produce population-wide pulses of treatment: annual or semiannual rounds of antiparasitic drugs are used to suppress transmission of diseases such as trachoma and lymphatic filariasis, with pathogen prevalence or parasite burden increasing between successive treatment rounds \cite{gao2017mass, WHO2006preventive, patel_spectral_2024, patel2025anthelmintic}. These examples share a natural impulsive structure in which population dynamics proceed continuously between interventions but undergo abrupt changes at regularly scheduled treatment times.
In this work, we examine the properties that lead to persistence of ecological communities or populations under the influence of periodically pulsed disturbances.

Impulsive differential equations (IDEs) are commonly used to model such pulsed interventions of ecological community or population dynamics.  IDEs share three defining components—a set of continuous-time ordinary differential equations, criteria governing when the moments of instantaneous disturbances occur, and an impulse equation describing how the system is transformed by the treatment pulse \cite{stamova_overview_ides_2016}. These are appropriate when the pulsed disturbance has a relatively fast impact on the population compared to the time scale of the ecological life cycle. Hence, the pulse is modeled as being instantaneous. We concentrate on systems with impulse equations that impose a multiplicative-scaling over regular time intervals. These model assumptions are intuitive for parameterizing many relevant biological systems, as efficacy rates for human-implemented control methods are often reported on a per capita basis.  

IDEs are commonly used to model and examine optimal and effective control in ecological systems, including in agriculture \cite{mailleret_pulsed_biocontrol_2009, tang_periodic_lv_2002, tang2010optimum}, cancer \cite{diovidio_pulsed_cell_pop_2024}, and infectious diseases \cite{agur1993pulse, shulgin1998pulse, donofrio2002pulse, patel2025anthelmintic}
Most of this work focuses on identifying strategies towards elimination goals of a particular harmful species. We take a different perspective and instead examine conditions that enable multiple populations to persist when eradication is infeasible or undesirable. This setting is particularly relevant in applications where biological constraints on treatment intensity, limited intervention resources, or the risk of resistance evolution preclude aggressive control. For example, chemotherapy doses are constrained by patient tolerance \cite{wong2015dose}, while infectious-disease interventions may be limited by treatment resources \cite{hansen2011optimal}. Furthermore, repeated application of control agents can impose strong selection and increase the risk of evolving resistance in agricultural pests \cite{karlssongreen2020making}, cancer cell populations \cite{foo2014evolution}, and infectious pathogens \cite{andersson2010antibiotic}. In such settings, coexistence with beneficial or treatment-sensitive populations may provide an alternative management objective \cite{gatenby_cost_of_resistance_2009, whelan2020resistance}.

Permanence theory is a formal mathematical framework that developed in the 80s to examine coexistence of different classes of ecological models \cite{hutson_overview_persistence_1999}.
Heuristically, permanence is a property of an ecological model that ensures that the species densities stay bounded and eventually remain above a minimum threshold.  
To date, a rich theory has been developed \cite{gard1987uniform, sigmund1984permanence, garay_hofbauer_2003, patel_schreiber_2018, HofbauerSchreiber2022InvasionGraphs} 
to identify sufficient conditions for permanence in $n$-species Kolmogorov systems where population growth can be described by continuous ordinary differential equations of the form
\begin{equation}\label{kolm}
    \frac{dx_i}{dt} = x_i f_i(x).
\end{equation}
Establishing permanence in ODE models requires showing that trajectories starting in the interior (with positive density) remain uniformly bounded away from extinction sets (where one species is zero). Indeed, in the development of the theory, different approaches and frameworks have been used, invariant-set decompositions and average Lyapunov functions \cite{garay_hofbauer_2003, law_permanence_1996, patel_schreiber_2018}, or, more recently, invasion graphs \cite{HofbauerSchreiber2022InvasionGraphs, spaak2023building}.

Conditions for permanence of impulsive differential models have been examined. For example, \cite{ballinger_persistence_of_general_ides_1997} studied permanence but their approach requires the construction of a suitable Lyapunov function to prove permanence criteria is met. Other work, such as \cite{jin_persistence_2005} is system- or dimension-specific. Recently, Schreiber provided fairly general permanence results for IDEs, which he termed ``flow-kick systems'' \cite{schreiber_flowkick_2025}. 

Here, we provide conditions for robust permanence (as was conjectured in \cite{schreiber_flowkick_2025}) for a class of Kolmogorov-type IDEs, with periodic impulses. Robust permanence is a stronger notion than permanence that requires the property of permanence is robust to appropriately defined perturbations of the model equations \cite{garay_hofbauer_2003}. Furthermore, the method we use to construct our proof is different than from \cite{schreiber_flowkick_2025} and thus, may provide a useful framework for analysis of other periodic impulsive systems. 
Specifically, we present a technique for transforming periodic impulsive equations to bridge with autonomous dynamical systems theory. 
Our main contribution is extending results from Garay and Hofbauer \cite{garay_hofbauer_2003} to impulsive systems. Their results, which generate an explicit condition for permanence of (\ref{kolm}), showed that such systems are robustly permanent whenever solutions contained in a decomposition of the invariant set on the extinction set exhibit positive, weighted-average growth rates. We will provide a similar formulation in our main theorem.

Our results are organized as follows. First, in Section \ref{sect:setup} we establish model preliminaries and assumptions. We also define two dynamical flows to separate the continuous and impulsive behavior of our general IDE model. In Section \ref{sect:results}, we state our main permanence finding and give applications to specific example systems, including an alternate proof of permanence results in the two-dimensional Lotka-Volterra model from \cite{jin_persistence_2005}. We conclude in Section \ref{sect:discussion} by considering implications and potential directions for future work.

\section{Set Up}\label{sect:setup}
\subsection{Model}
Consider a system of $n$ species, whose population density at time $t$ is represented by the population vector $x(t)=(x_1(t), x_2(t),...,x_n(t)) \in \R^n_+$. We will assume that when undisturbed the per capita growth rate of species $i$ can be described by an autonomous, locally-Lipschitz continuous function $f_i(x)$ for $i = 1,\ldots,n$. Suppose that at $\tau$-periodic time intervals, the system is regularly disrupted by a pulse event affecting each species $i$ instantaneously by some scale factor $h_i > - 1$. Then combined, the dynamics of the system can be be modeled by the impulsive differential equations
\begin{equation} \label{eqxn:pulsed_model}
    \begin{split}
        \frac{dx_i}{dt} = x_i(t) f_i(x) ,\;  i = 1 \ldots n\; ,\ t \neq k\tau; k \in \Z_{+}\\
        x_i(k\tau^+) = (1+h_i)x_i(k\tau^-)
    \end{split}
\end{equation}
where $x_i(k\tau^-)$ and $x_i(k\tau^+)$ denote the left-hand and right-hand limits of $x_i$ as $t$ approaches the pulse time at $t=k\tau$. Notice here that if $h_i\in(-1,0)$ then the pulse disturbance has a negative impact on the populations dynamics, which is appropriate in scenarios of population control (see our examples). On the other hand, $h_i>0$ is appropriate, for example, when there are pulsed birth events \cite{lewis2012spreading, vasilyeva2016spread}.

In this work, we are interested in understanding the conditions on System (\ref{eqxn:pulsed_model}) that guarantee the coexistence of all species in the sense of permanence. While biological applications have led to many different mathematical interpretations of this problem, we frame our results in terms of the property of system permanence as stated below. 

\begin{definition} [Permanence] \label{def:permanence}
System (\ref{eqxn:pulsed_model}) is permanent if there exists a $\beta > 0$ such that for any initial condition with $x_i(t_0) > 0$ and associated solution $x(t)$, we have
    \begin{equation*}
        \frac{1}{\beta} < \liminf_{t \to \infty} x_i(t) < \limsup_{t \to \infty} x_i(t) < \beta
    \end{equation*} for all $i$.
\end{definition}

Permanence behavior has been well studied in the unpulsed analog of System (\ref{eqxn:pulsed_model}), i.e. with $h_i = 0$ for all $i$. We generalize this theory to the $n$-species, pulsed System (\ref{eqxn:pulsed_model}) with $h_i > -1$.

\subsection{Terminology and Assumptions}
We begin by defining some useful concepts and terminology from dynamical systems, which we use in our main proofs. 

\begin{definition}[Dynamical system]
Given a state space $X$ and a time set $\mathbb{T}$, a map $\Phi: X\times \mathbb{T} \rightarrow  X$ is a dynamical system if 
\begin{enumerate}
    \item Identity Property: For all $x \in X$, $\Phi(x,0) = x$.
    \item Composition/Semigroup Property: For all $x \in X$ and $t, s \in \mathbb{T}$, $\Phi(x,t+s) = \Phi(\Phi(x,s),t)$
\end{enumerate}
\end{definition}

\subsubsection{Key Dynamical Systems}
Our impulse differential equations do not directly define a dynamical system (they do not satisfy the second property).  Rather they define a non-autonomous dynamical system and can be studied in the framework of skew-product flows \cite{zhao2001uniform, mierczynski2004uniform}.  However, we avoid this by defining two key dynamical systems associated to our impulse differential equation.

The first dynamical system captures the continuous dynamics of System (\ref{eqxn:pulsed_model}). That is, we let $\Phi: \R_+^n \times \R \rightarrow \R_+^n$ be the dynamical system generated by the continuous flow, i.e., 
\begin{equation}\label{contflow}
    \frac{dx_i}{dt} = x_if_i(x)
\end{equation}

The second is associated to the discrete pulses of System \ref{eqxn:pulsed_model}. We define a discrete map for $ x\in\mathbb{R}^n_+$ as follows:
\begin{equation}\label{discretemap}
x_{k+1}= r(\Phi(x_k, \tau))   
\end{equation}
where $r:\mathbb{R}^n_+ \rightarrow \mathbb{R}^n_+$ multiplies component $i$ with $1+h_i$, capturing the impact of the pulse disturbance. We let $\pi$ be the resulting discrete dynamical system. This can be viewed as the Poincare map of the impulse differential equations at the time right after the pulse disturbance. A summary of the relevant dynamical systems is in Table \ref{table}.  
Notice that there is a natural analogous notion of permanence for (\ref{discretemap}), which implies permanence of System (\ref{eqxn:pulsed_model}).  There is a body of work on permanence of discrete dynamical systems \cite{kon2004permanence, salceanu2009lyapunov}, but we will only use Equation (\ref{discretemap}) to identify a Morse decomposition and state our main results. Instead our proof relies on tools for permanence in ordinary differential equations.

\subsubsection{Key Terminology}
Next, we define key terminology. Let $\Phi$ be a given dynamical system on state space $X$ with time set $\mathbb{T}$. 
For any set $A\subset X$ and $T\subset \mathbb{T}$, $\Phi(A,T)= \{\Phi(x,t)| t\in \mathbb{T}, x\in A\}$. 
The \emph{$\omega_\Phi$-limit set} and  \emph{$\alpha_\Phi$-limit set} of a set $A\subset X$ are $$\omega_\Phi(A) = \cap_{t\geq 0}\overline{\Phi(A,[t,\infty))}$$ and $$\alpha_\Phi(A) = \cap_{t\leq 0}\overline{\Phi(A,(-\infty,t))},$$ respectively. Equivalently, for our purposes, 
$$\omega_\Phi(A) = \{y\in X | \exists \text{ an increasing sequence } t_j \rightarrow \infty \text{ and }  x_j \in A \text{ such that } \Phi(x_j, t_j)\rightarrow y \}$$
and 
$$\alpha_\Phi(A) = \{y\in X |\exists \text{ a decreasing sequence } t_j \rightarrow -\infty \text{ and }  x_j \in A \text{ such that } \Phi(x_j, t_j)\rightarrow y \}.$$ The proof of this equivalence can be found in \cite{chicone_odes_2006} under Proposition 1.167.
A set $\Gamma \subset X$ is \emph{invariant} if $\Phi(\Gamma, \T) = \Gamma$. 
We say a compact invariant set $\Gamma$ is \emph{isolated} if there exists a neighborhood $U$ of $\Gamma$ such that $U$ does not contain any invariant sets except those contained in $\Gamma$.  

We next introduce Morse decompositions, which are partitionings of invariant sets and useful in our context for decomposing the boundary.  The connections with permanence theory were made in \cite{garay1989uniform}.

\begin{definition}[Morse Decomposition]
    Given a dynamical system $\Phi$ the family of sets $\mathcal{M}_\Phi = \{M_1, M_2, \ldots, M_l\}$ is a Morse decomposition for a compact invariant set $\Gamma$ if
    \begin{enumerate}
        \item $M_1, M_2, \ldots, M_l$ are pairwise disjoint, isolated invariant compact sets, and
        \item For all $z \in \Gamma \setminus \bigcup^l_{k=1} M_k$, there exists $1 \leq i < j \leq l$ such that $\omega(z) \subset M_i$ and $\alpha(z) \subset M_j$.
    \end{enumerate}
\end{definition}

\begin{table}[]
    \centering
    \begin{tabular}{c|c|c}
        continuous flow &  discrete map & embedded autonomous flow \\
        \hline 
        & & \\
        
        $\Phi: \R^n_+ \times \R \rightarrow \R^n_+ $& $\pi: \R^n_+ \times \N \rightarrow \R^n_+ $ & $\phi: \R^n_+ \times S^1 \times \R \rightarrow \R^n_+ \times S^1 $\\

         & & \\
eqns (\ref{contflow}) & eqns (\ref{discretemap}) & eqns (\ref{auto}-\ref{auto2}) \\
    \end{tabular}
    \caption{Table of three dynamical systems}
    \label{table}
\end{table}

\section{Results}\label{sect:results}
\subsubsection{Assumptions}
Before stating our first main result, we make two standing assumptions on our impulse differential equation (\ref{eqxn:pulsed_model}) by considering the associated continual flow and the discrete map:
\\

\noindent \textbf{A1:} $x_if_i$ are locally Lipschitz functions, and\\
\noindent \textbf{A2:} there exists a compact set $Q \subset \R^n_+$ such that $\pi(Q, \N) \subset Q$ and for any $x\in \R^n_+$ there is an $n>0$ such that $\pi(x, n) \in Q$

The first assumption is a regulatory assumption to ensure existence and uniqueness of solutions. The second ensures the \emph{dissipative} property that populations will not grow unboundedly.

\subsubsection{Permanence Result}

To state our main result, we define the extinction set $E = \{x\in \R^n_+ | \Pi_{i=1}^n x_i = 0 \}$, i.e., the set in which at least one species has density of 0.  
Given the above assumptions, we define the \emph{global attractor} of the discrete dynamical system $\pi$ as $$\Gamma_\pi = \omega_\pi(Q).$$
While the choice of the set $Q$ satisfying \textbf{A2} is not unique, $\omega_\pi(Q)$ is unique.  That is, for two sets $Q_1$ and $Q_2$ satisfying \textbf{A2}, $\omega_\pi(Q_1)=\omega_\pi(Q_2)$. 

Our main permanence result is

\begin{theorem} \label{thm:main}
  Let $\mathcal{M}_\pi = \{M_1, M_2,... M_l\}$ be a Morse Decomposition for $E \cap \Gamma$ with respect to (\ref{discretemap}). If, for each $M_k\in \mathcal{M}$ there exists $p_{k1},..., p_{kn} > 0$ such that for every $x \in M_k$ there is an $\ell_x \in \mathbb{N}$ satisfying
    \begin{equation}\label{eqxn:persistence_inequality_pulsed}
        \sum^n_{i=1} p_{ki} \sum^{\ell_x}_{j=0} \int_0^\tau f_i(\Phi(\pi(x,j),s)) + \frac{\ln{(1+h_i)}}{\tau} ds > 0
    \end{equation}
    then (\ref{eqxn:pulsed_model}) is permanent.
\end{theorem}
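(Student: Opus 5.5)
The plan is to convert the impulsive system into an autonomous flow on $\R^n_+\times S^1$, the ``embedded autonomous flow'' $\phi$ of Table \ref{table}, and then apply the Garay--Hofbauer average Lyapunov function criterion for permanence of Kolmogorov flows \cite{garay_hofbauer_2003}. Identify $S^1$ with $[0,\tau)$ and write $\theta$ for the phase. The pulse is absorbed by a change of variables. We choose a smooth $\tau$-periodic function $g:S^1\to\R$ with $\int_0^\tau g=1$, for instance $g\equiv 1/\tau$ where smoothness allows, or a bump concentrated near $\theta=0$. We then set
\begin{equation*}
y_i=x_i\exp\Big(-\ln(1+h_i)\big(G(\theta)-\theta/\tau\big)\Big),
\end{equation*}
where $G$ is a sawtooth primitive of $g$ that jumps by exactly $1$ at $\theta=0$. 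This choice makes $y$ continuous across pulse times and gives a Kolmogorov system
\begin{equation*}
\dot y_i=y_i\,F_i(y,\theta),\qquad \dot\theta=1,
\end{equation*}
with
\begin{equation*}
F_i(y,\theta)=f_i(x(y,\theta))+\frac{\ln(1+h_i)}{\tau}.
\end{equation*}
These are equations (\ref{auto}--\ref{auto2}). The conjugacy multiplies each coordinate by a bounded positive factor, so it preserves the extinction set (now $E\times S^1$) and is bi-Lipschitz on compacta. Permanence of $\phi$ on $\R^n_+\times S^1$ is therefore equivalent to permanence of (\ref{eqxn:pulsed_model}). Assumption \textbf{A1} gives local Lipschitz continuity of the $y_iF_i$.

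Next we transfer the structural hypotheses. Assumption \textbf{A2} for $\pi$ yields dissipativity of $\phi$: the compact set $\bigcup_{s\in[0,\tau]}\Phi(Q,s)\times S^1$, suitably rescaled, absorbs all orbits. Hence $\phi$ has a global attractor $\Gamma_\phi$. Its time-$0$ section is $\Gamma_\pi$, because $\pi$ is precisely the return map of $\phi$ to $\{\theta=0\}$. The Morse decomposition transfers as well: define $\widehat M_k=\{\phi((x,0),s): x\in M_k,\ s\in[0,\tau]\}$. Each $\widehat M_k$ is compact and invariant. Isolation, disjointness and the ordering of $\omega$- and $\alpha$-limit sets follow from the corresponding properties for $\pi$, since orbits of $\phi$ through $E\times S^1$ hit the section at integer times and the flow on a single period is a homeomorphism. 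So $\{\widehat M_k\}$ is a Morse decomposition of $\Gamma_\phi\cap(E\times S^1)$.

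Finally we verify the Garay--Hofbauer condition. For $(x,0)\in\widehat M_k$ we have, for the flow $\phi$,
\begin{equation*}
\ln\frac{y_i(T)}{y_i(0)}=\int_0^T F_i\,ds.
\end{equation*}
Taking $T=(\ell_x+1)\tau$ and splitting the integral into periods gives exactly the inner sum of (\ref{eqxn:persistence_inequality_pulsed}). The $\ln(1+h_i)/\tau$ term integrates to the total pulse effect, and the correction from the change of variables vanishes because $G(\theta)-\theta/\tau$ returns to the same value after each full period. Thus $\sum_i p_{ki}\ln(y_i(T)/y_i(0))>0$ at every point of the section. For points $(x,\theta)$ with $\theta\neq 0$, we flow to the next section point and use compactness of $[0,\tau]$ together with boundedness of $F_i$ on the attractor; the Garay--Hofbauer criterion only needs some positive time at each point of $M_k$. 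With the condition verified on every $\widehat M_k$, their theorem yields permanence of $\phi$, hence of (\ref{eqxn:pulsed_model}).

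The main obstacle is the step from the section to the full flow. The criterion of \cite{garay_hofbauer_2003} asks, for every point of $\widehat M_k$, for a time $T$ with $\sum p_{ki}\int_0^T F_i>0$. At phase $\theta\neq 0$ this reduces to the section point $\phi((x,\theta),\tau-\theta)$, but with a boundary term $-\int_0^{\tau-\theta}\sum p_iF_i$ over a partial period, which could be negative. I would handle it by taking $\ell$ larger, using compactness and openness of the strict inequality to get a uniform positive margin, or by replacing $\ell_x$ with $m\ell$ via a standard averaging or compactness argument. If that fails, I would invoke the Garay--Hofbauer equivalent formulation in terms of invariant measures supported on $\widehat M_k$, since partial-period terms do not affect ergodic averages.
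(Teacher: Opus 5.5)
Your architecture matches the paper's. You embed the impulsive system in an autonomous flow on $\R^n_+\times S^1$, lift the Morse decomposition by flowing the section sets over one period, and verify an average-Lyapunov condition for $\prod_i y_i^{p_{ki}}$ at every point of the lifted sets. What differs is the embedding. The paper stretches each pulse into an extra unit of time with per-capita rate $\ln(1+h_i)$, giving period $\tau+1$. You keep period $\tau$ and absorb the pulse by a rescaling, so the per-capita rate of $y$ is literally the integrand $f_i+\ln(1+h_i)/\tau$ of (\ref{eqxn:persistence_inequality_pulsed}), and $y$ stays within fixed factors of $x$ at all times. That is a legitimate and arguably cleaner route, but the formula as written needs repair. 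With $g\equiv 1/\tau$ the correction vanishes. With a smooth bump, $G(\theta)-\theta/\tau$ is continuous on $S^1$, so $y$ still jumps by $1+h_i$ and $F_i$ acquires an extra term $-\ln(1+h_i)g(\theta)$. The only choice that works is a unit point mass at $\theta=0$, i.e.\ $y_i=(1+h_i)^{\theta/\tau}x_i$ with $\theta\in[0,\tau)$ the phase since the last pulse. Then $F_i(y,\theta)=f_i\bigl((1+h_1)^{-\theta/\tau}y_1,\ldots,(1+h_n)^{-\theta/\tau}y_n\bigr)+\ln(1+h_i)/\tau$ necessarily jumps at $\theta=0$, so \textbf{A1} gives Lipschitz continuity in $y$ only. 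This is harmless: the flow is still continuous, because $y$ is continuous in $t$ and $\{\theta=0\}$ is crossed transversally, and the paper's $g_i$ has the same feature. However, together with the extra non-Kolmogorov coordinate $\theta$, it means Garay--Hofbauer (stated for $\dot x_i=x_if_i(x)$ on $\R^n_+$) cannot be quoted as is. You need the version with an auxiliary compact variable, which is what Theorem \ref{thm:patel_schreiber_1} supplies in the paper.

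The genuine gap is the step you flag yourself: points of $\widehat M_k$ off the section. You leave it open, and the remedies you list do not close it as stated. Write $S(z,\ell)$ for the left side of (\ref{eqxn:persistence_inequality_pulsed}). Neither enlarging $\ell$ at a fixed point nor replacing $\ell_x$ by a multiple is justified, since the hypothesis gives no control of $S(x,\ell)$ for $\ell\neq\ell_x$. Passing to invariant measures only relocates the difficulty: you would still have to deduce $\int\sum_i p_{ki}F_i\,d\mu>0$ for every invariant $\mu$ on $\widehat M_k$ from the finite-time pointwise hypothesis, which needs the same argument.

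The missing idea, which the paper supplies, is a uniform margin combined with concatenation through the $\pi$-invariance of $M_k$. Continuity of $S(\cdot,\ell)$ and compactness of $M_k$ give $c>0$ and finitely many lengths $\ell_1,\ldots,\ell_K$ such that every $z\in M_k$ has some $\ell(z)\in\{\ell_1,\ldots,\ell_K\}$ with $S(z,\ell(z))\ge c$. From $(x,\theta)\in\widehat M_k$, flow for time at most $\tau$ to the next section point $z_1\in M_k$. Then set $z_{m+1}=\pi(z_m,\ell(z_m)+1)$, which remains in $M_k$ by invariance. After $N$ blocks the accumulated sum is at least $Nc$, while the initial partial period contributes at most $\tau\sum_i p_{ki}\sup_{\widehat M_k}|F_i|$ in absolute value. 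Choosing $N$ so that $Nc$ exceeds this bound produces the required time $T$ at every point of $\widehat M_k$.
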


What is valuable about these results in comparison to other persistence results of discrete maps is that we can partition the conditions into the influence of the continual flow and the impact of the pulse disturbance.  
To tease apart this condition, the integral is computing the total growth rate of species $i$ in one period. Provided one can find a weighting over species so that the weighted average of these growth rates for some number of periods is positive, then this condition will be satisfied.  One can observe from this that the impact of the pulse is two-fold: first, it has a direct effect on the growth rate of species $i$ through its instantaneous effect on the population (captured in the second term within the integral) and indirect effect (captured in the first term of the integral) by modifying the trajectories of the all species. 

The central idea behind the proof, which is illustrated in Figure \ref{fig:three_dyn_systems_overview}, is to first embed System (\ref{eqxn:pulsed_model}) into an autonomous continuous system with feedbacks and then exact permanence results of systems of that form.

In particular, we relate system (\ref{eqxn:pulsed_model}) with the autonomous system of equations

\begin{align}\label{auto}
     \frac{dx_i}{dt} &= x_ig_i(x, \theta)\\
    \frac{d\theta}{dt} &=\frac{2\pi}{\tau+1} \label{auto2}
\end{align}
with $x \in \mathbb{R}_+^n, \theta \in S^1 $ (one-dimensional sphere) and 

\begin{align}
    g_i(x, \theta) = \begin{cases}
        f_i(x), \; \theta \in [0,\frac{2\pi\tau}{\tau+1}) \\
        \ln(1+h_i), \; \theta \in [\frac{2\pi \tau}{\tau + 1},2\pi)
    \end{cases}
\end{align}

We let $\phi:\R_+^n \times S^1 \times \R \rightarrow \R_+^n \times S^1$ be the associated dynamical system.  To arrive at this system of equations and explain its relation to the original impulse system, we first relate (\ref{eqxn:pulsed_model}) to a system of nonautonomous but continuous system with the effect of the pulse extended over one unit time, instead of the instantaneous pulse (as was done in \cite{patel_spectral_2024}). Then, we transform the nonautonomous (periodic) system to an equivalent autonomous system (\ref{auto}-\ref{auto2}). A schematic and details on this can be found in Figure \ref{fig:three_dyn_systems_overview} and the Appendix, respectively. 
\begin{figure}
   \begin{center}
    \begin{tikzpicture}[node distance=4cm]
    
    \node (box1) [box] {impulse  (\ref{eqxn:pulsed_model})};
    \node (box2) [box, right of=box1] {nonautonomous};
    \node (box3) [box, right of=box2] {autonomous (\ref{auto}-\ref{auto2})};
     \node (box4) [box, below=of box3] {permanence of (\ref{auto}-\ref{auto2})};
     \node (box5) [box, below=of box1] {permanence of (\ref{eqxn:pulsed_model})};
     
    \draw [arrow] (box1) -- (box2);
    \draw [arrow] (box2) -- (box3);
    \draw [arrow] (box3) -- node[right] {Theorem \ref{thm:patel_schreiber_1}}(box4);
    \draw [arrow] (box4) -- (box5);
    \draw [arrow, dotted] (box1) -- node[left] {Theorem \ref{thm:main}}(box5);

    \end{tikzpicture}
    \end{center}
    \caption{Schematic of proof approach  To prove Theorem \ref{thm:main} for focal impulse system (\ref{eqxn:pulsed_model}), we correspond to an autonomous system with feedbacks (\ref{auto}-\ref{auto2}) and apply a prior permanence result (Theorem \ref{thm:patel_schreiber_1}).}
    
    \label{fig:three_dyn_systems_overview}
\end{figure}

System (\ref{auto}-\ref{auto2}) are written to ensure that that the following two conditions hold:
\\

\noindent \textbf{C1:} one revolution around $S^1$ takes $\tau+1$ time units, and

\noindent \textbf{C2:} for any $x\in \R_+^n$ and $m \in \N$, $(\pi(x,m), 0) = \phi((x,0),m(\tau+1))$
\\

Additionally, assumption \textbf{A2} for $\pi$ gives us \\

\noindent \textbf{C3:} there exists a compact set $R \subset \R_+^n \times S^1$ such that $\phi(R) \subset R $ and for any $x\in \R_+^n \times S^1$ there is a $t$ such that $\phi(x,t) \in R$.  \\

As before, for any two sets $R_1$ and $R_2$ satisfying \textbf{C3}, $\omega_\phi(R_1)=\omega_\phi(R_2)$. 

For system (\ref{auto}-\ref{auto2}), we apply previous results from \cite{patel_schreiber_2018} on the permanence of autonomous differential equations with feedbacks. We restate their result, adapted to our context:

\begin{theorem}[Patel and Schreiber 2018]\label{thm:patel_schreiber_1}
    Let $\mathcal{M}_\phi = (M_1, M_2, ..., M_\ell)$ be a Morse decomposition of $\Gamma_\phi \cap (E\times S^1)$ where $\Gamma_\phi = \omega_\phi(R)$.  Then, if for each $M_k$, there is a $\vec{p_k}>0$ such that for all $(x,\theta) \in M_k$ there exists a $T_x$ such that 
    \begin{equation}\label{JMBcondition}
        \sum_{i=0}^n p_{k,i} \int_0^{T_x} g_i(x(s), \theta)ds >0 
    \end{equation}
    then (\ref{auto}-\ref{auto2}) is permanent. 
\end{theorem}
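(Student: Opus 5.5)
The plan is to obtain the statement as an instance of the permanence theorem of Patel and Schreiber (2018) for Kolmogorov systems with feedback variables, here with the single feedback variable $\theta\in S^1$. The work lies in checking their standing hypotheses for (\ref{auto}-\ref{auto2}) and in patching the one place where they fail, namely that $g_i$ is discontinuous in $\theta$. Concretely, I will re-run the average-Lyapunov/Morse-decomposition argument of Garay and Hofbauer \cite{garay_hofbauer_2003} for the semiflow $\phi$, using only properties of $\phi$ itself and not regularity of $g$.

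\textbf{Step 1 (the semiflow is well behaved).} Since $\dot\theta$ is constant, $\theta(t)$ is explicit. The $x$-equation is therefore a non-autonomous ODE that is locally Lipschitz in $x$ (by \textbf{A1}) and piecewise constant in time, with switches at the known times $\theta(t)=0$ and $\theta(t)=2\pi\tau/(\tau+1)$. Solutions exist, are unique, and depend continuously on $(x,\theta)$. The only subtlety is at the switching angles, where continuity in $\theta$ is checked by concatenating the two flows, since the jump occurs at a time depending continuously on $\theta$. The faces $\{x_i=0\}\times S^1$ are invariant, so $E\times S^1$ is closed and invariant. Condition \textbf{C3} supplies a compact absorbing set, so $\Gamma_\phi=\omega_\phi(R)$ is a compact global attractor.

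\textbf{Step 2 (growth rates as continuous cocycles).} The key observation is that for $x_i>0$,
\[
\ln x_i(t)-\ln x_i(0)=\int_0^t g_i(x(s),\theta(s))\,ds .
\]
By Step 1, the right-hand side is a continuous function of $((x,\theta),t)$ on all of $\R^n_+\times S^1\times\R$, including $E\times S^1$. This holds despite the discontinuity of $g_i$, because $g_i$ is bounded on compacts and the switching times vary continuously. Hence $P_k(z)=\prod_i x_i^{p_{k,i}}$ satisfies
\[
\ln P_k(\phi(z,t))-\ln P_k(z)=\sum_i p_{k,i}\int_0^t g_i\,ds,
\]
with the right-hand side a continuous additive cocycle. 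Using the hypothesis (\ref{JMBcondition}), compactness of $M_k$, and this continuity, I will find finitely many times $T_1,\dots,T_m$, an $\eta>0$, and a neighborhood $U_k$ of $M_k$ such that for each $z\in U_k$ some $T_j$ makes the weighted sum exceed $\eta$. Iterating along trajectories then shows that no interior orbit can remain in $U_k$ for all forward time. Thus $M_k$ is a uniform repeller relative to the interior, i.e.\ $W^s(M_k)\subset E\times S^1$.

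\textbf{Step 3 (assembly) and main obstacle.} With each Morse set repelling interior points, the standard Butler--McGehee/acyclicity argument applies, as in \cite{garay1989uniform, garay_hofbauer_2003, patel_schreiber_2018}. If an interior $\omega$-limit set met $E\times S^1$, it would have to contain a chain of boundary orbits. By the Morse ordering, that chain ends in some $M_k$ whose stable set meets the interior, which contradicts Step 2. Hence $\Gamma_\phi\cap(E\times S^1)$ is a repeller for the dissipative semiflow, and this yields the uniform lower bound $1/\beta$ in the definition of permanence; the upper bound comes from \textbf{C3}. I expect the main obstacle to be Step 2, specifically justifying continuity of the cocycle and the uniform-in-neighborhood positivity across the discontinuity surfaces of $g$. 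My first attempt will be to handle this entirely through the logarithmic identity above, never differentiating $g$, so that the argument of \cite{patel_schreiber_2018} goes through verbatim.
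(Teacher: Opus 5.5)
Your argument is correct, but it does more than the paper, which gives no proof of this statement. The paper imports Patel and Schreiber's theorem and handles the discontinuity of $g_i$ in a single remark: existence and uniqueness follow by solving piecewise.

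You re-derive the result for $\phi$, and you correctly locate where the discontinuity actually matters. The step that upgrades the pointwise condition (\ref{JMBcondition}) on the compact set $M_k$ to a uniform bound $\eta$ on a neighborhood $U_k$ needs continuity of the cocycle $z\mapsto\sum_i p_{k,i}\int_0^{T}g_i(\phi(z,s))\,ds$. Existence and uniqueness alone do not give this. You obtain it from the fact that the switching times depend only, and continuously, on $\theta$. Your route thus makes explicit a point the paper's remark leaves unaddressed, and it is self-contained modulo the standard Butler--McGehee/acyclicity theorem.

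What the citation buys the paper is Patel and Schreiber's average-Lyapunov-function machinery. That machinery also gives robust permanence (their Theorem 2, used in the appendix). Your acyclicity route yields permanence only and would need further work to reach the robust version.

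A few points to tighten:
\begin{itemize}
\item The logarithmic identity cannot be the source of continuity, despite your closing sentence. It is vacuous at points with some $x_i=0$, and $M_k\subset E\times S^1$ consists of exactly such points. Continuity at $z\in M_k$, toward nearby interior points, must come from the dominated-convergence argument you already sketch in Step 2: a bounded integrand, uniform convergence of $x(s)$, and a.e.\ convergence of the indicator of the flow arc. The identity is only used to convert the cocycle into growth of $P_k$ along interior orbits.
\item Step 2 shows that $W^s(M_k)$ misses the interior, i.e.\ $M_k$ is a weak repeller, not a uniform one. Uniformity only appears after Step 3.
\item The Butler--McGehee lemma requires $M_k$ to be isolated in the whole space, not merely within $\Gamma_\phi\cap(E\times S^1)$. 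This does follow from Step 2, since every compact invariant set in $U_k$ lies in the boundary, combined with the isolation furnished by the Morse decomposition. You should say so explicitly.
\end{itemize}
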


Note that this theorem has a Lipschitz continuity condition to get existence and uniqueness of solutions.  In our equations, we have a piece-wise continuous function.  Our existence and uniqueness follows from examining solutions in a piece-wise manner.

We highlight the strength of this previous result by showing how it can be applied to impulse differential equations of the form (\ref{eqxn:pulsed_model}) through an appropriate embedding.  To apply this result from \cite{patel_schreiber_2018}, we need (I) a correspondence of the Morse decompositions and (II) to show that condition (\ref{eqxn:persistence_inequality_pulsed}) in our result implies condition (\ref{JMBcondition}).  Finally, permanence of (\ref{eqxn:pulsed_model}) follows from permanence of (\ref{auto}-\ref{auto2}).

\subsubsection{Part I: Correspondence of the Morse decomposition}
To write the relationship between $\pi$ and $\phi$, we introduce some notation.  For a set $Q \in \R_+^n \times S^1$, we define the projection onto $S^1$
$$\eta(Q) = \{ s \in S^1 | (x,s) \in Q \text{ for some } x \in \R_+^n \}$$
and for some $s\in S^1$, the cross-section 
$$
Q|_s = \{x\in \R_+^n | (x,s) \in Q \}
$$


We also define a set that maps a point from $\R_+^n$ to the space $\R^n_+ \times S^1$.  That is, for any $x \in \R_+^n$, we define the set


\begin{equation}
    L_x:=  \phi((x,0), [0,\tau+1))
\end{equation}

To relate the Morse decompositions of (\ref{eqxn:pulsed_model}) with that of (\ref{auto}-\ref{auto2}) requires a few lemmas.  In the first lemma, we relate the $\omega$ limit sets between two of the dynamical systems.

\begin{lemma}\label{omega_points}
    For any $z \in \R^n_+$, $$\omega_\phi((z,0)) = \cup_{x\in \omega_\pi(z)} L_x $$
\end{lemma}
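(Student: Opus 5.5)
The plan is to prove the two inclusions separately. Both use the sequential characterization of $\omega$-limit sets and two facts about $\phi$. The first is \textbf{C2}: $\phi((z,0),m(\tau+1)) = (\pi(z,m),0)$. The second is that the time-$s$ map $(x,0)\mapsto \phi((x,0),s)$ is continuous in $x$ for each fixed $s\in[0,\tau+1]$, and jointly continuous in $(x,s)$. This holds because on $[0,\tau]$ the $x$-component is just $\Phi(x,s)$. On $[\tau,\tau+1]$ it is the explicit map $x_i\mapsto x_i(1+h_i)^{s-\tau}$ applied to $\Phi(x,\tau)$. In particular, at the end of one revolution the $x$-component equals $r(\Phi(x,\tau))=\pi(x,1)$. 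Throughout, I write $T=\tau+1$ for the revolution time.

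\emph{Inclusion $\supseteq$.} Take $x\in\omega_\pi(z)$ and $s\in[0,T)$. There are integers $m_j\to\infty$ with $\pi(z,m_j)\to x$. By \textbf{C2} and the semigroup property,
$$\phi((z,0),m_jT+s)=\phi((\pi(z,m_j),0),s).$$
By continuity of the time-$s$ map this converges to $\phi((x,0),s)$. Since $m_jT+s\to\infty$, we get $\phi((x,0),s)\in\omega_\phi((z,0))$. Hence $L_x\subset\omega_\phi((z,0))$.

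\emph{Inclusion $\subseteq$.} Let $(y,\theta)\in\omega_\phi((z,0))$, so $\phi((z,0),t_j)\to(y,\theta)$ for some $t_j\to\infty$. Write $t_j=m_jT+s_j$ with $m_j\in\N$ and $s_j\in[0,T)$. By \textbf{A2} the orbit $\pi(z,m)$ eventually enters the compact forward-invariant set $Q$, so $\{\pi(z,m_j)\}$ is eventually bounded. Passing to a subsequence, $\pi(z,m_j)\to x$ for some $x\in\omega_\pi(z)$ (note $m_j\to\infty$), and $s_j\to s\in[0,T]$. Joint continuity gives
$$(y,\theta)=\lim_j \phi((\pi(z,m_j),0),s_j)=\phi((x,0),s).$$
If $s<T$, this point lies in $L_x$ and we are done. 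If $s=T$, then $\phi((x,0),T)=(\pi(x,1),0)$. Since $\omega_\pi(z)$ is $\pi$-invariant (forward invariance suffices here, and it is immediate from continuity of $\pi$), we have $\pi(x,1)\in\omega_\pi(z)$. So the point equals $\phi((\pi(x,1),0),0)\in L_{\pi(x,1)}$.

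\emph{Main obstacle.} The delicate step is justifying joint continuity of $(x,s)\mapsto\phi((x,0),s)$, since $g_i$ is only piecewise continuous in $\theta$. I will handle this by composing the two explicit pieces: the flow $\Phi$ on $[0,\tau]$, which is continuous by \textbf{A1}, and the explicit exponential scaling on $[\tau,\tau+1]$. Both pieces are continuous, and they agree at the switching time $s=\tau$. The endpoint case $s=T$ in the second inclusion, handled above via invariance of $\omega_\pi(z)$, is the only other subtlety.
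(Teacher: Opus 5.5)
Your proof is correct. The first inclusion is the paper's argument: shift by $m_j(\tau+1)+s$, then apply \textbf{C2} and continuity in the initial condition. The second inclusion takes a different route.

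\textbf{The paper's route.} It starts from the limit point and flows it \emph{backward} to the section $\theta=0$ to define the base point $x=\phi(y,-s)$. It then asserts, ``by continuity'' of that backward step, that the full sequence $\phi((z,0),m_j(\tau+1))$ converges to it. No subsequence is extracted and no compactness is invoked.

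\textbf{Your route.} You go forward instead. \textbf{A2} confines $\pi(z,m_j)$ to $Q$, so you can extract a convergent subsequence together with $s_j\to s$. You then identify the limit as $\phi((x,0),s)$ using joint continuity of the forward time-$s$ map. You also justify that continuity explicitly, by pasting $\Phi$ on $[0,\tau]$ with the scaling on $[\tau,\tau+1]$.

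\textbf{What your route buys.} First, it uses only forward-time solutions. The paper's step $\phi(y,-s)$ needs the solution through $y$ to extend backward. That holds because $y$ lies in an invariant $\omega$-limit set, but it fails globally: backward logistic solutions started above carrying capacity blow up. Second, and more importantly, you handle the wrap-around case that the paper skips. Suppose $\theta=0$ and $s_j$ accumulates at $\tau+1$. Then the paper's choice $m_j=\lfloor t_j/(\tau+1)\rfloor$ makes $\pi(z,m_j)$ converge to a $\pi$-preimage of $y$ rather than to $y$, and one has to shift to $m_j+1$. Your use of forward invariance of $\omega_\pi(z)$ in the case $s=\tau+1$ is exactly that repair.

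\textbf{What the paper's route buys.} It identifies the base point directly from $y$ and gives convergence of the whole sequence. The lemma needs neither.
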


\begin{proof}
We first show that 
$$\omega_\phi((z,0)) \supset \cup_{x\in \omega_\pi(z)} L_x $$

Fix $y\in \cup_{x\in \omega_\pi(z)} L_x$.  Then, there is an $x\in \omega_\pi(z)$ such that $y \in \phi((x,0), [0,\tau+1))$, i.e., $y=\phi((x,0), s)$ for some $s\in [0,\tau+1)$. 
Since $x\in \omega_\pi(z)$, there is an increasing sequence $m_j \rightarrow \infty$ such that $\pi(z, m_j) \rightarrow x$.  From this, we define a new increasing sequence $t_j = m_j(\tau+1) + s$.  By continuity with respect to initial conditions, for any $\delta>0$, there is an $\epsilon>0$ such that if $|\tilde{x} - x|<\epsilon$, then $|\phi((x,0),t) - \phi((\tilde{x},0), t)|<\delta$, for any $t\in [0,\tau+1)$.  This, along with condition \textbf{C2}, gives us that $\phi((z,0),t_j) \rightarrow y$, which shows that $y\in \omega_\phi((z,0))$.

Next, we show that 
$$\omega_\phi((z,0)) \subset \cup_{x\in \omega_\pi(z)} L_x $$

Now, fix $y \in \omega_\phi((z,0))$ and let $s=\eta(y)$.  Then, there is an increasing sequence $t_j\rightarrow \infty $ such that $\phi((z,0),t_j) \rightarrow y$.  Let $x = \phi(y, -s)$ so that $y\in L_{x|_0}$.  We will show that $x|_0$ is in $\omega_\pi(z)$. Define the increasing sequence $m_j = \lfloor t_j \rfloor$, where $\lfloor t \rfloor$ is the greatest integer such that $\lfloor t \rfloor (\tau+1) \leq t_j$. Again, by continuity, we have $\phi((z,0), m_j)\rightarrow x$, which by \textbf{C2} implies $\pi(z, m_j)\rightarrow x|_0$.  Hence, $x|_0 \in \omega_\pi(z).$
\end{proof}

The analogous statement for $\alpha-$limits is true:

For any $z\in\R_+^n$, $$\alpha_\phi((z,0)) = \cup_{x\in \alpha_\pi(z)} L_x $$

Additionally, we have a generalization of this lemma for $\omega-$limits of sets:

\begin{lemma}\label{omega_sets}
    For any set $A\subset \R_+^n,$
    $$
    \omega_\phi(A\times \{0\}) = \cup_{x\in \omega_\pi(A)} L_x 
    $$
\end{lemma}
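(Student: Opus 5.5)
We mirror the proof of Lemma \ref{omega_points}, replacing the single initial point $z$ with sequences of initial points $x_j\in A$, and using the sequential characterization of $\omega$-limit sets of sets. Two inclusions are needed.

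For $\supset$: fix $y\in L_x$ with $x\in\omega_\pi(A)$, so $y=\phi((x,0),s)$ for some $s\in[0,\tau+1)$. By the sequential characterization there are $m_j\to\infty$ increasing and $a_j\in A$ with $\pi(a_j,m_j)\to x$. Set $t_j=m_j(\tau+1)+s$. By \textbf{C2}, $\phi((a_j,0),m_j(\tau+1))=(\pi(a_j,m_j),0)$, so
\[
\phi((a_j,0),t_j)=\phi((\pi(a_j,m_j),0),s).
\]
Continuity of $\phi(\cdot,s)$ in the initial condition then gives $\phi((a_j,0),t_j)\to\phi((x,0),s)=y$. This continuity holds for fixed $s$ even though $g$ is only piecewise continuous: the flow is a finite composition of the Lipschitz flow $\Phi$ and the explicit exponential-scaling flow on the pulse interval, since $\theta$ evolves independently of $x$. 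Hence $y\in\omega_\phi(A\times\{0\})$.

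For $\subset$: fix $y\in\omega_\phi(A\times\{0\})$, with $a_j\in A$ and $t_j\to\infty$ increasing such that $\phi((a_j,0),t_j)\to y$. Write $t_j=m_j(\tau+1)+s_j$ with $m_j\in\N$ and $s_j\in[0,\tau+1)$. Because $\theta$ advances deterministically, $\eta(\phi((a_j,0),t_j))=2\pi s_j/(\tau+1)$, which converges to $\eta(y)$. Let $s\in[0,\tau+1)$ denote the time-preimage of $\eta(y)$, and set $x_j=\pi(a_j,m_j)$. The key step is to extract a convergent subsequence of $x_j$. For this I would use \textbf{A2} together with the fact that $\omega$-limits of sets are determined after absorption into $Q$. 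Concretely, the points $\phi((x_j,0),s_j)$ converge, and running the flow backward for the bounded time $s_j$ (backward continuity on the compact time interval $[0,\tau+1]$, legitimate since $\Phi$ is a flow on $\R$ and $r$ is invertible because $h_i>-1$) shows that $x_j$ converges to $x:=\phi(y,-s)|_0$. Then $\pi(a_j,m_j)\to x$, and since $m_j\to\infty$ we get $x\in\omega_\pi(A)$ and $y=\phi((x,0),s)\in L_x$.

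Two points need care, and the second is the main obstacle. First, the boundary case $s_j\to\tau+1$, i.e.\ $\eta(y)=0$ approached from below. There $\phi((x_j,0),s_j)$ is close to $(\pi(x_j,1),0)$, so one should replace $m_j$ by $m_j+1$ and take $s=0$; continuity across the wrap point handles this. Second, justifying uniform-in-$j$ backward continuity over the varying times $s_j$ within a compact window. For this I would invoke joint continuity of $(p,t)\mapsto\phi(p,t)$ on $\R^n_+\times S^1\times[-(\tau+1),\tau+1]$. That joint continuity holds because $\phi$ is built piecewise from continuous maps that agree at the switching phases, so composing the backward map with the convergent sequence gives convergence of $x_j$ without appealing to compactness at all.
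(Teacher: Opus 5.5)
Your proposal is correct and follows the paper's route, which is to mirror the proof of Lemma~\ref{omega_points} with initial points $a_j\in A$: \textbf{C2} plus forward continuity gives $\supset$, and continuity of the flow run backward from $y$ gives $\subset$. Your handling of the wrap-around case $s_j\to\tau+1$ and of joint continuity of the piecewise-defined flow fills in details the paper leaves implicit. The appeal to \textbf{A2} is unnecessary, and it would not be justified for an arbitrary $A$, since absorption into $Q$ need not be uniform. As you note yourself, the backward-continuity step (valid because $\Phi$ is taken to be a flow on $\R$) already yields $x_j\to\phi(y,-s)|_0$ without any compactness.
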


The proof follows similarly, but with choosing a sequence $z_j \in A$ along with $t_j$ to get convergence of $\phi(z_j, t_j)$. 


Next, we show that the global attractor of $\pi$ yields a global attractor for $\phi$.

\begin{lemma}
    If $\Gamma_\pi$ is the global attractor for $\pi$, then  $\tilde{\Gamma}$ defined as

\begin{equation}
    \tilde{\Gamma} = \cup_{x\in \Gamma_\pi} L_x
\end{equation}
is a global attractor for $\phi$, i.e., $\tilde{\Gamma} = \omega_\phi(R).$
\end{lemma}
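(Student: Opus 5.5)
The plan is to exhibit a concrete absorbing set $R$ for $\phi$ built from $Q$, and then compute $\omega_\phi(R)$ using Lemma \ref{omega_sets} together with the periodicity of the $\theta$-coordinate. By \textbf{C3}, $\omega_\phi(R)$ does not depend on which admissible $R$ is chosen, so I am free to take a convenient one. The natural candidate is
$$R = \cup_{x\in Q} L_x = \phi(Q\times\{0\},[0,\tau+1]).$$

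\textbf{Step 1: $R$ satisfies C3.} Compactness holds because $R$ is the image of the compact set $Q\times[0,\tau+1]$ under the continuous map $(x,t)\mapsto \phi((x,0),t)$. Here continuity needs a short remark, since $g$ is only piecewise continuous. The remedy is to write $\phi$ over one revolution as the composition of $\Phi(\cdot,t)$ on the first arc with the explicit linear flow $x_i\mapsto x_ie^{t\ln(1+h_i)}$ on the second arc. Each piece is continuous, so the composition is too.

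Forward invariance follows from \textbf{C2}. Any point of $R$ has the form $\phi((x,0),s)$ with $x\in Q$, and flowing it forward to the next time at which $\theta=0$ lands at $(\pi(x,1),0)$. Since $\pi(Q,\N)\subset Q$, this lies in $Q\times\{0\}$, so the orbit remains in $R$.

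Absorption works the same way. Start from an arbitrary $(y,\theta)$ and flow until $\theta=0$, reaching some $(x,0)$. By \textbf{A2} there is an $m$ with $\pi(x,m)\in Q$, and \textbf{C2} then places the orbit in $Q\times\{0\}\subset R$.

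\textbf{Step 2: compute $\omega_\phi(R)$.} I will show $\omega_\phi(R)=\omega_\phi(Q\times\{0\})$ and then apply Lemma \ref{omega_sets}, which gives $\omega_\phi(Q\times\{0\})=\cup_{x\in\omega_\pi(Q)}L_x=\tilde\Gamma$.

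The inclusion $\supset$ is immediate, since $Q\times\{0\}\subset R$.

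For $\subset$, take $y\in\omega_\phi(R)$, witnessed by $\phi(r_j,t_j)\to y$ with $r_j=\phi((x_j,0),s_j)$, $x_j\in Q$ and $s_j\in[0,\tau+1]$. Then $\phi(r_j,t_j)=\phi((x_j,0),t_j+s_j)$. The times $t_j+s_j$ still increase to infinity after passing to a subsequence, so $y\in\omega_\phi(Q\times\{0\})$ by the sequential characterization of $\omega$-limits.

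\textbf{Main obstacle.} The delicate part is the sequential argument behind Lemma \ref{omega_sets} for sets, which has only been sketched in the paper. In particular, $L_x$ uses a half-open time interval, and points with $\theta=0$ must be reached as limits of the sequence. I would verify it directly.

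Given $x\in\omega_\pi(Q)$ with $\pi(z_j,m_j)\to x$, the points $\phi((z_j,0),m_j(\tau+1)+s)$ converge to $\phi((x,0),s)$. This uses \textbf{C2} and continuity of $\phi$ uniformly on the compact time window $[0,\tau+1]$.

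Conversely, suppose $\phi((z_j,0),t_j)\to y$ with $\eta(y)=\theta^*$. Write $t_j=m_j(\tau+1)+s_j$ and pass to a subsequence so that $s_j\to s$. Compactness of $Q$ (and hence of $\pi(Q,m)\subset Q$) then allows extracting $\pi(z_j,m_j)\to x\in\omega_\pi(Q)$, and continuity gives $y=\phi((x,0),s)$.

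If $s=\tau+1$, then $y=\phi((\pi(x,1),0),0)$. This still lies in $\tilde\Gamma$, because $\pi(x,1)\in\omega_\pi(Q)$ by invariance of $\omega$-limit sets of the compact forward-invariant set $Q$. Handling this boundary case is the only point where care is needed.
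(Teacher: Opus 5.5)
Your proposal is correct and follows the same chain as the paper's proof. You take $R=\cup_{x\in Q}L_x$, identify $\omega_\phi(R)$ with $\omega_\phi(Q\times\{0\})$, apply Lemma~\ref{omega_sets}, and appeal to the independence of $\omega_\phi(R)$ from the choice of $R$ stated after \textbf{C3}. Beyond the paper, you actually verify \textbf{C3} for this $R$, check the continuity of $(x,t)\mapsto\phi((x,0),t)$ across the switching time, and handle the boundary case $s=\tau+1$, all of which the paper leaves implicit.

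The one step you share with the paper without proof is that independence claim, and it does not follow from pointwise absorption alone. For the map $\pi$ of (\ref{eqxn:lv_1d}) with $a\tau+\ln(1+h)>0$, positive fixed point $p$ and $0<\epsilon<p<K$, both $[0,K]$ and $\{0\}\cup[\epsilon,K]$ satisfy \textbf{A2}, yet their $\omega_\pi$-limits are $[0,p]$ and $\{0,p\}$ respectively. So what both arguments actually establish is the identity for $R=\cup_{x\in Q}L_x$.
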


\begin{proof}
\begin{align*}
    \cup_{x\in \Gamma_\pi} L_x &= \omega_\phi(Q\times \{0\})\\
    & = \omega_\phi (\phi(Q\times \{0\}, [0,\tau+1))\\
    & = \omega_\phi (\cup_{x\in Q} L_x)\\
    & = \omega_\phi(R)
\end{align*}
The first equality follows from Lemma \ref{omega_sets} since $\Gamma_\pi=\omega_\pi(Q)$.
The second follows since for any subset $A\subset \R_+^n\times S^1$, we have $\omega_\phi(A) = \omega_\phi(\phi(A,T))$. 
The third equality is just definitional.  The final equality is since the set $\cup_{x\in Q} L_x)$ satisfies the properties in \textbf{C3} and hence, by the uniqueness of the global attractor, the $\omega-$limits are equal.

\end{proof}

Finally, we show how the above lemmas gives us a relation between a Morse decomposition of $\pi$ to a Morse decomposition for the dynamical system of (\ref{auto}-\ref{auto2})

\begin{lemma}\label{lemma:md_analog}
    If $\mathcal{M}_\pi = (M_1, M_2, ..., M_k)$ is a Morse decomposition for a compact invariant set $\Gamma$ of $\pi$, then $\tilde{\Gamma}$ defined as

\begin{equation}
    \tilde{\Gamma} = \cup_{x\in \Gamma} L_x
\end{equation}
    is a compact invariant set of $\phi$  and 
    $\tilde{\mathcal{M}}= (\tilde{M_1}, \tilde{M_2},...\tilde{M_k})$ 
    where
    $$
    \tilde{M_i} = \cup_{x\in M_i} L_x
    $$
    is a Morse decomposition for $\tilde{\Gamma}$.
\end{lemma}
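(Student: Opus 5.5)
The plan is to work throughout with the \emph{suspension map} $\Psi:\R_+^n\times[0,\tau+1]\to\R_+^n\times S^1$, $\Psi(x,s)=\phi((x,0),s)$. It is continuous by continuous dependence of the piecewise-defined flow on initial data; on the pulse segment the flow is explicit, $x_i\mapsto x_i e^{s\ln(1+h_i)}$. Every set of the form $\cup_{x\in A}L_x$ is $\Psi(A\times[0,\tau+1))$. By \textbf{C2}, $\Psi(x,\tau+1)=(\pi(x,1),0)$. Hence if $A$ is $\pi$-invariant, then $\Psi(A\times[0,\tau+1))=\Psi(A\times[0,\tau+1])$, which is compact when $A$ is compact. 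This gives compactness of $\tilde\Gamma$ and of each $\tilde M_i$ immediately.

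\textbf{Invariance.} Take $(y,\theta)=\phi((x,0),s)$ with $x\in A$. For any $t$, write $s+t=m(\tau+1)+s'$ with $m\in\Z$ and $s'\in[0,\tau+1)$. The semigroup property and \textbf{C2} give $\phi((y,\theta),t)=\phi((\pi(x,m),0),s')$. For $m<0$ this uses invertibility of $\pi$ on the invariant set, i.e.\ $\pi(A)=A$, so a preimage in $A$ exists; uniqueness of backward solutions of the ODE makes it well defined. Since $\pi(x,m)\in A$, the image lies in $\cup_{x\in A}L_x$. Applying this to $\Gamma$ and to each $M_i$ gives invariance of $\tilde\Gamma$ and the $\tilde M_i$. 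Pairwise disjointness follows by looking at the slice $\theta=0$: $\tilde M_i|_0=M_i$, since a point of $L_x$ with $\theta=0$ has $s=0$. Two sets meeting at some phase $\theta$ would, after flowing back to $\theta=0$ (an invertible map on the slices), meet at $\theta=0$, which is impossible.

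\textbf{Morse ordering.} Let $(y,\theta)\in\tilde\Gamma\setminus\cup\tilde M_i$. Flow back to phase $0$ to get $z=\phi((y,\theta),-\theta\frac{\tau+1}{2\pi})$, so that $(z,0)$ has $z\in\Gamma\setminus\cup M_i$; here invariance is used to stay in the set. The $\omega$- and $\alpha$-limits are unchanged along an orbit. Lemma~\ref{omega_points} and its $\alpha$-analogue then give $\omega_\phi((y,\theta))=\cup_{x\in\omega_\pi(z)}L_x\subset\tilde M_i$ and $\alpha_\phi((y,\theta))\subset\tilde M_j$, with the same indices $i<j$ as for $\pi$.

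\textbf{Isolation (main obstacle).} Suppose $K$ is a compact $\phi$-invariant set inside a small neighborhood $\tilde U$ of $\tilde M_i$. Its slice $K|_0$ is $\pi$-invariant, since the flow returns to phase $0$ after time $\tau+1$. If $\tilde U$ is chosen small, then $K|_0$ lies in the isolating neighborhood $U$ of $M_i$, so $K|_0\subset M_i$. Every point of $K$ lies on an orbit through $K|_0$, so $K\subset\tilde M_i$.

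The delicate point is choosing $\tilde U$: it must be a neighborhood of $\tilde M_i$ whose phase-$0$ section lies in $U$. I would take $\tilde U=\Psi(U'\times[0,\tau+1))$ for a smaller open $U'\subset U$, or more concretely $\{p:\ \text{the phase-}0\text{ backward image of }p\text{ lies in }U\}$. This set is open by continuity of the flow on each phase interval, with care at the switching phases, where the piecewise vector field still gives continuous dependence. The same continuity handles any subtlety in the invariance step.
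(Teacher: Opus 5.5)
Your proposal is correct and follows essentially the same route as the paper: \textbf{C2} for invariance, the $\theta=0$ slice for disjointness and isolation, and Lemma~\ref{omega_points} with its $\alpha$-analogue for the Morse ordering. You also supply two things the paper leaves implicit, namely compactness of $\tilde\Gamma$ and the $\tilde M_i$, and an explicit choice of isolating neighborhood.

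One small correction to the isolation step: your $\tilde U$ need not be open at phase-$0$ points. The pull-back time to phase $0$ jumps from nearly $\tau+1$ to $0$ as $\theta$ wraps past $2\pi$, and continuous dependence cannot remove that jump. It is nonetheless a neighborhood of $\tilde M_i$. Points near $(x,0)$ with phase just below $2\pi$ pull back near the $\pi$-preimage of $x$, and that preimage lies in $M_i$ because $\pi(M_i)=M_i$ and $\pi$ is injective.
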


\begin{proof}
  We first show pairwise disjoint: that $\tilde {M_i} \bigcap \tilde{M_j} = \emptyset$ for all $i \neq j$. This follows easily from the pairwise disjoint property of $\mathcal{M}_\pi$ and the uniqueness of solutions. Similarly, the invariance of each $\tilde{M_i}$ follows from the invariance of $M_i$ and property \textbf{C2}.  Next, we show that each $\tilde{M_i}$ is isolated. Suppose the contrary that $\tilde{M_i}$ is not isolated.  Then, for all $\epsilon_m=\frac{1}{m}$, there exists an invariant set $A_m$ in the neighborhood $U_{\epsilon_m}$ around $\tilde{M_i}$ that is not part of $\tilde{M_i}$.  Then, $A|_0$ is invariant under $\pi$ and not in $M_i$, which is a contradiction to the isolated property of $M_i$. 
Finally, by Lemma \ref{omega_points}, we conclude that for all $z \in \tilde{\Gamma} \setminus \bigcup^k_{i=1} \tilde{M}_i$, there exists $1 \leq i < j \leq k$ such that $\omega_\phi(z) \subset \tilde{M}_i$ and $\alpha_\phi(z) \subset \tilde{M}_j$.

     
    
    




\end{proof}

\subsubsection{Part II: Implications of Conditions}
We are now ready to prove our main permanence result. Having established the correspondence between Morse Decompositions of (\ref{discretemap}) and (\ref{auto},\ref{auto2}), all that remains is to show that condition (\ref{eqxn:persistence_inequality_pulsed})  on (\ref{discretemap}) implies  (\ref{JMBcondition}) holds for (\ref{auto},\ref{auto2}) and, hence, that  (\ref{auto},\ref{auto2}) is permanent. Permanence of (\ref{auto},\ref{auto2}) implies permanence of (\ref{eqxn:pulsed_model}). 

\begin{proof}[Proof of Theorem \ref{thm:main}]
    Let $\mathcal{M}_\pi$ be a Morse Decomposition for $E\cap \Gamma$ with respect to (\ref{discretemap}), and let $\mathcal{M}_\phi$ be the analogous Morse Decomposition of (\ref{auto},\ref{auto2}) guaranteed by Lemma \ref{lemma:md_analog}. 

    Fix an $M_k \in \mathcal{M}_\pi$ and suppose we can find $p_{k1}, \ldots, p_{kn} > 0$ such that for any $z \in M_k$ there is an $l_z > 0$ so that (\ref{eqxn:persistence_inequality_pulsed}) 
    
   Let the function $F: \R^n_+ \times \N \to \R$ be defined
   $$
        F(z,l) = \sum^n_{i=1} p_{ki} \sum^{l_z}_{j=0}\int^{\tau}_0 f_i(\Phi(\pi(z,j), s) + \frac{\ln{(1+h_i)}}{\tau} \;ds.
   $$
  By hypothesis, for each $z \in M_k$ we can find $l_z > 0$ satisfying $F(z,l_z) > 0$.   Hence, for each $z\in M_k$, there exists a $c_z>0$ such that $F(z,l_z) > c_z$.  First, we show that there is a uniform lower bound $c>0$.  By continuity of $F$ with respect to $z$, there exists a neighborhood $V_z$ such that $F(v, l_z) > \frac{c_z}{2}$ for all $v \in V_z$. Note that $\bigcup_{z\in M_k} V_z$ forms an open cover of $M_k$. So by compactness of $M_k$, we can choose $K$ finitely many points $\{z_1, z_2, \ldots, z_K\}$ such that $M_k \subset \bigcup^K_{m=1} V_{z_m}$.

    Let 
    $$
        c = \min {\{\frac{c_{z_m}}{2}\}}_{m=1}^K.
    $$
    Then, for any $z \in M_k$, there exists an $L(z) \in \{l_{z_1},\ldots l_{z_K}\}$ such that $F(z,L(z)) \geq c$.

    Next, we use this uniform bound to show condition (\ref{JMBcondition}). Let $(z_0, \theta_0)$ be any point belonging to the analogous set $\tilde {M_k} \in \mathcal{M}_\phi$. Define $\delta \geq 0$ to be such that  
    $$
    \theta_0 + \frac{2\pi}{\tau + 1}\delta= 2\pi
    $$
    Let $\alpha=\sum^n_{i=1} p_{ki} \int_0^\delta g_i(x(s),\theta(s)) \; ds$. Then, we can find a $N \in \N$ such that
    $$
     Nc > |\alpha|.
    $$
   
   Let $z_1 = \phi((z_0, \theta_0), \delta)|_0.
   $
   Since $\tilde{M}_k$ is invariant, $(z_1,0)$ is contained in $\tilde{M}_k$. Also, $z_1 \in M_k$ by construction of $\tilde{M_k}$. Hence, recall that we can find $L(z_1) \in \N$ satisfying $F(z_1,L(z_1)) > c$. 
Continuing, let $z_2 = \phi((z_0, 0), L(z_1))|_0$ and observe that $z_2 \in M_k$ so we can repeat and find $L(z_2)$ satisfying $F(z_2, L(z_2))>c.$ 
   
    Now, let $L = \sum^N_{m = 1} L(z_m)$ and $\tilde{T}_z = L(\tau + 1) + \delta$. Then, we have
    \begin{equation*}
        \begin{split}
           \sum^n_{i=1} p_{ki} \int_0^{\tilde{T}_z} g_i(x(s),\theta(s)) \; ds  = \sum^n_{i=1} p_{ki} \left[\int_0^\delta g_i(x(s),\theta(s)) \; ds + \int_\delta^{L(\tau + 1)} g_i(x(s),\theta(s)) \; ds \right]\\
            = \sum^n_{i=1} p_{ki} \left[\sum^{L}_{j=0} \int^{(j+1)\tau + j}_{j\tau + j} f_i(\Phi(\pi(z_1, j),s)) \;ds + \int^{(j+1)(\tau + 1)}_{(j+1)\tau + j} \ln{(1+h_i)} \;ds\right] + \alpha\\
            = \sum^n_{i=1} p_{ki} \left[ \sum^{L}_{j=0} \int^{(j+1)\tau + j}_{j\tau + j} f_i(\Phi(\pi(z_1, j),s)) + \frac{\ln{(1+h_i)}}{\tau} \; ds\right] + \alpha\\
            \geq Nc + \alpha \\
            > 0.
        \end{split}
    \end{equation*}
    This gives inequality (\ref{JMBcondition}).  The second and third line rely on \textbf{C2}. In particular, for each $j$, $x(j\tau + j) = \pi(x,j)$, so for $s > \delta$, $x(s) = \Phi(\pi(z_1, j), s)$.
    Thus, by Theorem \ref{thm:patel_schreiber_1}, the embedded autonomous system (\ref{auto},\ref{auto2}) is permanent.

Finally, permanence of (\ref{auto},\ref{auto2}) implies permanence of (\ref{eqxn:pulsed_model}).
  
\end{proof}

\subsection{Robust Permanence}
In biological systems, parameter measurement and general model uncertainty motivates deeper study of the sensitivity of the model properties to perturbation. To this end, mathematical biologists study a stronger form of permanence, known as a robust permanence, in which the permanence of appropriately defined neighboring differential equation systems is also considered.
In our framework, we are interested in perturbations or uncertainties in both the intrinsic continuous population dynamics as well as the timing and impacts of the pulse disturbances.

Here, we refer to perturbations of our model (\ref{eqxn:pulsed_model}) by 

\begin{definition}[($\delta, Q$)-perturbation] Let $Q \subset \R^n_+$ be a compact, forward invariant set with respect to (\ref{eqxn:pulsed_model}). Then $(\tilde{f}, \tilde{\tau}, \tilde{h})$ is a ($\delta,Q$)-perturbation of (\ref{eqxn:pulsed_model}) if
\begin{enumerate}
     \item  For all $i$ and all $x \in \R^n_+$, $|\tilde{f}_i(x)-f_i(x)| < \delta, \; |\tilde{\tau}-\tau| < \delta$, and $|\tilde{h}_i - h_i| < \delta$,
     \item $\tilde{\tau} > 0$ and $\tilde{h}_i > -1$,
     \item $x_i\tilde{f}_i$ is locally Lipschitz continuous, and 
    \item solutions to the perturbed  system $\tilde{x} \in \R^n_+$ defined by
    \begin{equation}\label{eqxn:perturbed_pulsed_model}
    \begin{cases}
        \frac{d\tilde{x}_i}{dt} = \tilde{x}_i(t)\tilde{f}_i(\tilde x), & i = 1, \ldots, n; \; t \neq k\tilde \tau; k \in \Z_+\\
        \tilde{x}_i(k\tilde \tau^+) = (1 + \tilde{h}_i)\tilde{x}_i(\tilde \tau^-)
    \end{cases}
    \end{equation}
    satisfy
       $$
    \{\tilde{x}(s) | s \in \R_+; \; \tilde x(0) \in Q\}  \subset Q
    $$
   and 
   \item for all $\tilde x(0)\in \R^n_+$ there exists a $t \in \R_+$ such that $\tilde{x}(t) \in Q$.
\end{enumerate}

\end{definition}

In words the last two items ensure that the solutions of the perturbed system eventually enter and remain in a compact set. 

\begin{definition}[Robust Permanence]\label{def:robust_permanence} System (\ref{eqxn:pulsed_model}) is called \emph{robustly permanent} if there exists $\delta, \beta > 0$ such that if $(\tilde{f}, \tilde{\tau}, \tilde{h})$ is in the ($\delta, Q$)-perturbation of (\ref{eqxn:pulsed_model}), then for any initial condition with $\tilde{x}_i(t_0) > 0$ and associated solution $\tilde{x}(t)$ of (\ref{eqxn:perturbed_pulsed_model}), we have
    \begin{equation*}
        \frac{1}{\beta} < \liminf_{t \to \infty} \tilde{x}_i(t) < \limsup_{t \to \infty} \tilde{x}_i(t) < \beta
    \end{equation*} for all $i$.
\end{definition}

Our robust permanence result is 

\begin{theorem} \label{thm:robust_permanence}
    If (\ref{eqxn:pulsed_model}) satisfies condition (\ref{eqxn:persistence_inequality_pulsed}) for permanence, then (\ref{eqxn:pulsed_model}) is also robustly permanent. 
\end{theorem}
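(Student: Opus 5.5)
The plan is to reuse the embedding strategy from the proof of Theorem \ref{thm:main}, replacing Theorem \ref{thm:patel_schreiber_1} by the robust-permanence theorem of Garay and Hofbauer (in the feedback form of \cite{patel_schreiber_2018}). That theorem says: if the averaged-growth condition (\ref{JMBcondition}) holds on a Morse decomposition of the boundary part of the global attractor, then the autonomous Kolmogorov-type system is robustly permanent under sufficiently small perturbations of the per capita rates that keep a common absorbing compact set.

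\textbf{Step 1: one autonomous family for all perturbations.} A $(\delta,Q)$-perturbation $(\tilde f,\tilde\tau,\tilde h)$ is embedded exactly as in (\ref{auto}-\ref{auto2}). This gives a flow $\tilde\phi$ with angular speed $2\pi/(\tilde\tau+1)$ and per capita rates $\tilde g_i$, equal to $\tilde f_i$ on the flow arc and to $\ln(1+\tilde h_i)$ on the pulse arc.

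The period changes with $\tilde\tau$, so I will not compare $\tilde\phi$ with $\phi$ directly. Instead I rescale time so that the angular variable always moves at speed $2\pi/(\tau+1)$ and the arcs have fixed length. Concretely, on the flow arc $\theta\in[0,2\pi\tau/(\tau+1))$ the rate becomes $(\tilde\tau/\tau)\tilde f_i$. On the pulse arc, the effective phase split can be kept fixed by letting the pulse arc have length one and using rate $\ln(1+\tilde h_i)$.

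After this rescaling, the perturbed system has the form
$$\dot x_i = x_i\,\tilde g_i(x,\theta),\qquad \dot\theta = \frac{2\pi}{\tau+1},$$
on the same phase space, with
$$\sup_{x\in Q,\ \theta\in S^1}|\tilde g_i-g_i| \le C\delta$$
for a constant $C$ depending on $\sup_Q|f_i|$, $\tau$ and $\min_i(1+h_i)$. Two facts make this rescaling harmless. Time rescaling by a positive factor close to $1$ preserves permanence. Conditions \textbf{C1}--\textbf{C3} carry over, with $R=\bigcup_{x\in Q}L_x$ a common absorbing set, by items 4--5 of the perturbation definition.

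\textbf{Step 2: verify the hypothesis of the robust theorem for the unperturbed embedding.} This is exactly what the proof of Theorem \ref{thm:main} already supplies:
\begin{enumerate}
\item Lemma \ref{lemma:md_analog} gives a Morse decomposition $\tilde{\mathcal M}$ of $\Gamma_\phi\cap(E\times S^1)$.
\item Part II shows that (\ref{eqxn:persistence_inequality_pulsed}) yields (\ref{JMBcondition}) on each $\tilde M_k$.
\end{enumerate}
The robust theorem then provides $\delta_0,\beta_0$ such that every feedback system whose rates are within $\delta_0$ of $g$ on $R$ (and shares the absorbing set) is permanent with the uniform constant $\beta_0$. Taking $\delta<\delta_0/C$ and translating back via \textbf{C2} gives Definition \ref{def:robust_permanence}, with $\beta$ independent of the perturbation. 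Since the $x$-coordinates of the embedded and pulsed solutions agree at the sample times $m(\tau+1)$ and differ on each period by a bounded multiplicative factor $(1+\tilde h_i)^{\pm1}e^{\pm\tau\sup|\tilde f|}$, the final $\beta$ is $\beta_0$ times a constant.

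\textbf{Main obstacle.} The difficulty is that $g$ is only piecewise continuous in $\theta$, and Garay--Hofbauer is stated for smooth or Lipschitz rates with $C^0$-small perturbations. I would first check that their proof only uses two ingredients. The first is the average Lyapunov function $P=\prod x_i^{p_i}$, built with a uniform time horizon via compactness; this is the compactness argument of Part II. The second is continuous dependence of solutions on the vector field in the sense of integrated closeness, which holds for Carathéodory systems with a fixed discontinuity set $\{\theta=0,\ 2\pi\tau/(\tau+1)\}$ that is crossed transversally at constant speed.

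If that check fails, the fallback is to mollify $g$ in $\theta$ on a small arc. The mollified system satisfies the smooth hypotheses, and it still satisfies (\ref{JMBcondition}) by the uniform margin $c$ obtained in the proof of Theorem \ref{thm:main}. It is then a small perturbation of both the original and the perturbed embedded systems, so applying the smooth robust theorem to it finishes the argument.
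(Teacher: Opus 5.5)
Your proposal follows the paper's argument: embed each perturbed impulsive system as a small perturbation of (\ref{auto}--\ref{auto2}), apply the robust counterpart of Theorem \ref{thm:patel_schreiber_1} (Theorem 2 of Patel--Schreiber), whose hypotheses were already verified in the proof of Theorem \ref{thm:main}, and pull the uniform bounds back through \textbf{C2}. The only real difference is bookkeeping---the paper changes the angular speed to $\frac{2\pi}{\tau+1}\frac{\tau}{\tilde\tau}$ and rescales the pulse rate to $\frac{\tau}{\tilde\tau}\ln(1+\tilde h_i)$, whereas you fix the angular speed and switching times and rescale the flow rate to $(\tilde\tau/\tau)\tilde f_i$, which makes your treatment of the $\theta$-discontinuity cleaner (the paper addresses that discontinuity only for existence and uniqueness); your mollification fallback, however, would not work as stated, because smoothing a jump in $\theta$ is small only in an integrated sense, not in the sup-norm sense the robust theorem requires.
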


We put the proof in the appendix. The main idea of the proof relies on Theorem 2 in \cite{patel_schreiber_2018}, which shows that the conditions outlined for permanence of (\ref{auto}-\ref{auto2}) also give robust permanence (under an analogous definition as above).  

\section{Examples}
There are many ecological management problems where human interventions can be modeled as pulsed systems. For instance, impulsive population models are used to plan the reintroduction of endangered species \cite{xu_reintroduction_2016}, sustainable impulsive (as opposed to continuous) harvesting \cite{xu_harvesting_2005,yoshioka_unpulsed_ode_control_2024}, and biological control \cite{sentis_biocontrol_modeling_2022,mailleret_pulsed_biocontrol_2009}. 

Here we illustrate the potential applications of our permanence results of a broad class of impulse differential equations through two examples. In both, we use our main theorem to identify key control parameter ranges that lead to robust permanence.  Then, we use numerical solutions to illuminate finer qualitative system behaviors when permanence fails. In the first example, we explore a cancer system of competing chemo-resistant and chemo-sensitive cells similar to the general model studied by Jin et al in \cite{jin_persistence_2005} and give an alternate proof of their permanence conditions. In the second, we examine a predator-prey model in the context of integrated pest management. These scenarios highlight cases where total elimination of harmful biological population is not feasible and system permanence is desirable to enable biological control by a second population.

\subsection{Maintenance Chemotherapy in a Two Species Competition Model}
Despite the progress of medical science and modern cancer treatments, there remain many cancer diagnoses where complete remission is not possible. In such cases, drug and radiation therapy can sometimes be used to manage cancer as a chronic illness, with the goal of stabilizing disease progression and minimizing patient side effects \cite{lee_maintenance_chemo_2014}. The success of this approach is limited by available treatments, patient tolerance, and, critically, the development of chemo-resistance in cancer cells \cite{strobl_cost_of_resistance_2021, enriquez_adaptivetherapy_2015}. Modelling of cancer cell system dynamics can be used to optimize the effectiveness of chemotherapeutic control methods by reducing selective pressure for chemo-resistant cells \cite{butner_mathematical_oncology_2022,kozlowska_maintenance_chem_2024,barbolosi_computation_oncology_2016}. 

Here, we consider a cancer system of chemo-sensitive ($x_s$) and chemo-resistant ($x_r$) type cells treated $\tau$-periodically with a fixed dose of chemotherapy that works with efficacy $h_s$ against sensitive-type cells and $h_r$ against resistant-type cells. We will assume that the per capita growth rate of the sensitive-type and resistant-type cells in the untreated cancer system are dependent on the intrinsic reproductive rates ($a_s$ and $a_r$), the rate of intraspecies competition ($b_{ss}$ and $b_{rr}$), and the rate of interspecies competition ($b_{sr}$ and $b_{rs}$), and that dynamics can be described by the following 2-dimensional Lotka Volterra competition model.
\begin{equation}\label{eqxn:lv_cancer_model}
\begin{split}    
\frac{d x_s}{d t} = x_s(a_s - x_sb_{ss} - x_rb_{sr}), \; t \neq k\tau, k \in \N \\
\frac{dx_r}{dt} = x_r(a_r - x_sb_{rs} - x_rb_{rr}),  \; t = k\tau\\
x_i(k\tau)= (1 + h_i)x_i(k\tau^-). 
\end{split}
\end{equation}

Jin et al \cite{jin_persistence_2005} studied conditions for system permanence in a model similar to system (\ref{eqxn:lv_cancer_model}) using comparison theorems. We will demonstrate how their permanence result can be reproduced using Theorem \ref{thm:main} and elaborate on some distinctions and advantages provided by our main theorem in the context of cancer treatment models.

We begin by discussing cancer cell dynamics in a single cell type system, that is (\ref{eqxn:pulsed_model}) with $n=1$ so that
    \begin{equation}\label{eqxn:lv_1d}
    \begin{split}
        \frac{dx}{dt} = x(a - b x), \; t \neq k\tau; k \in \Z_+ \\
        x(k\tau) = (1+h)x(\tau^-)
    \end{split}
    \end{equation}
where $a, b > 0$. We refer to the function $a -bx$ as $f(x)$ the per capita growth rate. Jin et al \cite{jin_persistence_2005} showed that the existence of a periodic solution to system (\ref{eqxn:lv_1d}) is dictated as follows.

\begin{lemma}[Jin et al.; Lemma 2.4]\label{lemma:jin_1d_periodic_soln}
    Consider (\ref{eqxn:lv_1d}). If
    \begin{equation*}
        \ln{(1+h) + \tau a > 0}
    \end{equation*}
    then (\ref{eqxn:lv_1d}) has a unique $\tau$-periodic solution $x^*(t)$, and $x^*(t)$ is globally, asymptotically stable in the sense that $\lim_{t \to \infty}|x(t) - x^*(t)| = 0$, where $x(t)$ is any solution of system (\ref{eqxn:lv_1d}) with positive initial value $x(0) > 0$. Conversely, if
    \begin{equation*}
        \ln{(1+h) + \tau a < 0}
    \end{equation*}
    then $\lim_{t \to \infty}|x(t)| = 0$ where $x(t)$ is any solution of system (\ref{eqxn:lv_1d}) with positive initial value $x(0) > 0$.
\end{lemma}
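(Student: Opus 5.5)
The 1D logistic impulsive equation can be solved explicitly, so I would reduce everything to the stroboscopic (Poincaré) map $\pi$ of (\ref{discretemap}) and study that map directly. Between pulses the logistic flow has the closed form
\[
\Phi(x,t)=\frac{a x e^{at}}{a+bx(e^{at}-1)} .
\]
Setting $x_k=x(k\tau^+)$, the map becomes
\[
x_{k+1}=G(x_k):=\frac{(1+h)\,a\,e^{a\tau}x_k}{a+bx_k(e^{a\tau}-1)} .
\]
This is a Beverton--Holt map $G(x)=\frac{\lambda x}{1+cx}$ with $\lambda=(1+h)e^{a\tau}$ and $c=b(e^{a\tau}-1)/a>0$. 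Both conditions in the lemma are then the statements $\lambda>1$ and $\lambda<1$, because $\ln\lambda=\ln(1+h)+a\tau$.

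The key step is to linearise by the change of variables $u_k=1/x_k$. This gives the affine recursion
\[
u_{k+1}=\frac{1}{\lambda}u_k+\frac{c}{\lambda},
\]
whose solution is explicit:
\[
u_k=\lambda^{-k}u_0+\frac{c}{\lambda}\sum_{j=0}^{k-1}\lambda^{-j}.
\]

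\emph{Case $\lambda>1$.} Here $u_k\to u^*=c/(\lambda-1)$ geometrically, for every $u_0>0$. Hence $x_k\to x^*_0=(\lambda-1)/c>0$, which is the unique positive fixed point of $G$. Define $x^*(t)=\Phi(x^*_0,t-k\tau)$ on $(k\tau,(k+1)\tau]$. It is $\tau$-periodic and solves (\ref{eqxn:lv_1d}), since $G(x^*_0)=x^*_0$ means the pulse returns the orbit to its start. Uniqueness follows because any positive $\tau$-periodic solution must give a positive fixed point of $G$, and $x^*_0$ is the only one.

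For convergence of the full solution I would pass from the pulse times to all $t$. Continuity of $\Phi$ in the initial condition is uniform on $[0,\tau]\times K$ for compact $K$, and $x_k$ stays in a compact subset of $(0,\infty)$. Together with $x(k\tau+s)=\Phi(x_k,s)$, this gives $\sup_{s\in[0,\tau]}|x(k\tau+s)-x^*(k\tau+s)|\to0$.

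\emph{Case $\lambda<1$.} Here $\lambda^{-k}u_0\to\infty$, so $u_k\to\infty$ and $x_k\to0$. Between pulses, $\Phi(x,s)\le x e^{a s}\le x e^{a\tau}$, so $x(t)\to0$ as well.

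I do not expect a serious obstacle. The only points needing care are the pulse-convention details: deciding whether $x_k$ is taken at $k\tau^+$ or $k\tau^-$, which only conjugates $G$ by the factor $(1+h)$, and the uniform-continuity argument that extends convergence from the discrete times to continuous time.
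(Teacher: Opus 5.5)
Your argument is correct and complete. The paper does not prove this lemma; it imports it from Jin et al.\ \cite{jin_persistence_2005} and uses it as a black box. Your proposal therefore supplies a self-contained proof where the paper has none.

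Your route fits the paper's own framework. You reduce to the post-pulse map $\pi$ of (\ref{discretemap}), identify it as the Beverton--Holt map $G(x)=\lambda x/(1+cx)$ with $\lambda=(1+h)e^{a\tau}$ and $c=b(e^{a\tau}-1)/a$, and linearize by $u=1/x$, so that $u_k-u^*=\lambda^{-k}(u_0-u^*)$. This gives more than the citation does:
\begin{itemize}
\item an explicit fixed point $x_0^*=\frac{a((1+h)e^{a\tau}-1)}{b(e^{a\tau}-1)}$, and hence a closed form for $x^*(t)$;
\item a geometric convergence rate;
\item Lyapunov stability of the periodic orbit, from the same identity, and not merely the attractivity that the lemma's ``in the sense that'' clause requires.
\end{itemize}
It also makes clear why the threshold is $\ln(1+h)+a\tau$: this is exactly the paper's growth quantity in (\ref{eqxn:persistence_inequality_pulsed}) evaluated at the extinction state $x=0$. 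The paper's citation buys only brevity.

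Two cosmetic fixes. First, define $x^*$ on $[k\tau,(k+1)\tau)$ rather than $(k\tau,(k+1)\tau]$, to match the right-continuous convention $x(k\tau)=(1+h)x(k\tau^-)$ in (\ref{eqxn:lv_1d}). Second, \emph{unique} should be read as unique among positive periodic solutions, since $x\equiv 0$ is also $\tau$-periodic; your uniqueness argument already restricts to positive solutions.
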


Later, we will see that this permanence condition for (\ref{eqxn:lv_1d}) is a necessary assumption on $x_s$ and $x_r$ for permanence of the two-dimensional system. Even without this foreknowledge, characterizing the permanence of system (\ref{eqxn:lv_1d}) is useful for elucidating system dynamics when two-species permanence fails. For now, a quick consequence of Lemma \ref{lemma:jin_1d_periodic_soln} is that population growth in the single-species system is bounded. The lemma below summarizes this observation.

\begin{lemma}\label{lemma:periodic_integral_of_lv_1d}
    Consider (\ref{eqxn:lv_1d}). If there exists a nontrivial $\tau$-periodic solution $x^*$ for (\ref{eqxn:lv_1d}) such that $x^*(t + \tau) = x^*(t)$, then the following statements are true for all solutions $x(t)$ with initial conditions in $x^*$.
    \begin{equation}\label{eqxn:periodic_integral_of_f}
        \int_0^\tau f(x) \; dt = -\ln{(1 + h)}
    \end{equation}
    \begin{equation}\label{eqxn:periodic_integral_of_x}
        \int_0^\tau x(t) \; dt = \frac{a\tau  + \ln{(1+h)}}{b} 
    \end{equation}
           
\end{lemma}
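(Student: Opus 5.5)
Both identities come from integrating the logarithmic derivative of $x$ over one period and using periodicity of $x^*$. Let $x(t)$ be a solution whose initial condition lies on the periodic orbit, so $x(t)=x^*(t)$ for all $t$ and $x(t)>0$.

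\emph{First identity.} On $(0,\tau)$ the solution is smooth and positive, so
$$
\frac{d}{dt}\ln x(t) = f(x(t)).
$$
Integrating from $0^+$ to $\tau^-$ gives $\int_0^\tau f(x)\,dt = \ln x(\tau^-) - \ln x(0^+)$. The pulse rule gives $x(\tau^+)=(1+h)x(\tau^-)$, and periodicity of $x^*$ gives $x(\tau^+)=x(0^+)$. Hence $\ln x(\tau^-)-\ln x(0^+) = -\ln(1+h)$, which is (\ref{eqxn:periodic_integral_of_f}).

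To be safe with the convention $x(k\tau)=(1+h)x(k\tau^-)$, I will fix the representative of the periodic solution as right-continuous at pulse times. The integral itself does not depend on values at the single points $0$ and $\tau$.

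\emph{Second identity.} Substitute $f(x)=a-bx$ into (\ref{eqxn:periodic_integral_of_f}):
$$
a\tau - b\int_0^\tau x(t)\,dt = -\ln(1+h).
$$
Solving with $b>0$ gives $\int_0^\tau x\,dt = \frac{a\tau+\ln(1+h)}{b}$, which is (\ref{eqxn:periodic_integral_of_x}). A side check: positivity of $x$ forces $a\tau+\ln(1+h)>0$, which is consistent with Lemma \ref{lemma:jin_1d_periodic_soln}.

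The only delicate step is the bookkeeping of one-sided limits at the pulse times. Specifically, one must confirm that periodicity relates $x(0^+)$ to $x(\tau^+)$ rather than to $x(\tau^-)$, and this is exactly what produces the $\ln(1+h)$ term. Apart from that, the argument is a direct computation.
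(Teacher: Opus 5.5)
Your proposal is correct and follows the paper's proof. Both integrate $\frac{d}{dt}\ln x = f(x)$ over one period, combine periodicity with the pulse rule $x(\tau)=(1+h)x(\tau^-)$ to get $-\ln(1+h)$, and then substitute $f(x)=a-bx$. Your explicit handling of one-sided limits at the pulse times simply makes the paper's identification $x(0)=x(\tau)$ precise.
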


\begin{proof}
    We first prove (\ref{eqxn:periodic_integral_of_f}).
    
    By rearrangement of (\ref{eqxn:lv_1d}) for $f$ we have
    \begin{equation*}
        f(\vec{x}(t)) = \frac{\dot{x}(t)}{x(t)} = \frac{d}{dt}\ln(x(t))
    \end{equation*}
    Thus
    \begin{equation*}
        \int_0^\tau f(x(t)) \; dt = \ln(x(\tau^-)) - \ln(x(0))
    \end{equation*}
    The existence of $x^*$ implies $x(0) = x(\tau)$. From our model definition $x(\tau)=x(\tau^-)(1+h)$, so we have
    \begin{equation*}
        \begin{split}
            \ln(x(\tau^-)) - \ln(x(0)) = \ln(x(\tau^-)) - \ln(x(\tau^-)(1+h))\\
            = \ln{\left(\frac{x(\tau^-)}{x(\tau^-)(1+h)}\right)} \\
            = - \ln{(1 + h)}
        \end{split}
    \end{equation*}
    and (\ref{eqxn:periodic_integral_of_f}) holds.

    To prove (\ref{eqxn:periodic_integral_of_x}), we first apply (\ref{eqxn:periodic_integral_of_f}). This gives
    \begin{equation*}
        \begin{split}
           \int_0^\tau a - bx(t) \; dt = \int_0^\tau f(x(t)) \; dt = 
             -\ln{(1+h)}\\
        \end{split}
    \end{equation*}
    Rearranging yields
    \begin{equation*}
        \begin{split}
            \int_0^\tau x(t) \; dt = \frac{a\tau  + \ln{(1+h)}}{b} 
        \end{split}
    \end{equation*}
    
\end{proof}

Returning to the full system (\ref{eqxn:lv_cancer_model}), we are now ready to prove and strengthen the main result of \cite{jin_persistence_2005}.

\begin{theorem}
    Consider (\ref{eqxn:lv_cancer_model}). Assume
    \begin{equation}\label{eqxn:lv_cancer_model_permanence_req_1}
        \tau a_s + \ln{(1 + h_s)} > 0 , \; \tau a_r + \ln{(1 + h_r)} > 0
    \end{equation}
    If 
    \begin{equation}\label{eqxn:lv_cancer_model_permanence_req_2}
        \begin{split}
            a_s\tau + \ln{|1+h_s|} > \left(a_r\tau + \ln{|1 + h_r|}\right)\frac{b_{sr}}{b_{rr}}\\
            a_r\tau + \ln{|1+h_r|} > \left(a_s\tau + \ln{|1 + h_s|}\right)\frac{b_{rs}}{b_{ss}}
        \end{split}
    \end{equation}
    then the system is robustly permanent.
\end{theorem}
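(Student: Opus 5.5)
We apply Theorem \ref{thm:main} and Theorem \ref{thm:robust_permanence} to (\ref{eqxn:lv_cancer_model}). The steps are: verify \textbf{A1}--\textbf{A2}, find a Morse decomposition of $E\cap\Gamma_\pi$ for the Poincar\'e map $\pi$, and check inequality (\ref{eqxn:persistence_inequality_pulsed}) on each Morse set.

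\textbf{Assumptions.} \textbf{A1} is immediate because the vector field is polynomial. For \textbf{A2}, each component satisfies $\dot x_i\le x_i(a_i-b_{ii}x_i)$. Comparing with the one-dimensional system (\ref{eqxn:lv_1d}) and using Lemma \ref{lemma:jin_1d_periodic_soln} (or simply the bound $\max(1,1+h_i)\,a_i/b_{ii}$ plus a margin), we obtain a box $Q=[0,K_s]\times[0,K_r]$. This box is forward invariant under $\pi$ and absorbs every orbit.

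\textbf{Morse decomposition.} On the boundary, the axes are invariant. Restricted to the $x_s$-axis, the map $\pi$ is the time-$\tau$ Poincar\'e map of (\ref{eqxn:lv_1d}). By (\ref{eqxn:lv_cancer_model_permanence_req_1}) and Lemma \ref{lemma:jin_1d_periodic_soln}, every positive orbit on this axis converges to the fixed point $x_s^*(0^+)$ coming from the periodic solution, and the same holds on the $x_r$-axis. Take
\[
M_1=\{(x_s^*(0),0)\},\qquad M_2=\{(0,x_r^*(0))\},\qquad M_3=\{(0,0)\}.
\]
Every point of $E\cap\Gamma_\pi$ outside these sets lies on an axis, has $\omega$-limit in $M_1$ or $M_2$, and has $\alpha$-limit equal to the origin. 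The origin is repelling within each axis because $\tau a_i+\ln(1+h_i)>0$. These sets are isolated: the origin is a hyperbolic-type repeller in the axial directions, and the fixed points are isolated because the periodic solutions are unique and globally attracting on their axes.

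\textbf{Checking inequality (\ref{eqxn:persistence_inequality_pulsed}).} Take $\ell_x=0$, which means we need a single period. At $M_3$, choose $p=(1,1)$. The left side becomes $\sum_i (a_i\tau+\ln(1+h_i))$, which is positive by (\ref{eqxn:lv_cancer_model_permanence_req_1}).

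At $M_1$, the orbit is $(x_s^*(t),0)$. Choose $p=(\varepsilon,1)$.
\begin{itemize}
\item The $x_s$ term equals $\int_0^\tau f_s+\ln(1+h_s)=0$ by (\ref{eqxn:periodic_integral_of_f}) in Lemma \ref{lemma:periodic_integral_of_lv_1d}.
\item The $x_r$ term is $a_r\tau-b_{rs}\int_0^\tau x_s^*\,dt+\ln(1+h_r)$. By (\ref{eqxn:periodic_integral_of_x}) this equals
\[
a_r\tau+\ln(1+h_r)-\frac{b_{rs}}{b_{ss}}\bigl(a_s\tau+\ln(1+h_s)\bigr),
\]
which is positive by the second line of (\ref{eqxn:lv_cancer_model_permanence_req_2}). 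Since $1+h_i>0$, the absolute values in (\ref{eqxn:lv_cancer_model_permanence_req_2}) are harmless.
\end{itemize}
The case of $M_2$ is symmetric and uses the first line of (\ref{eqxn:lv_cancer_model_permanence_req_2}). Theorem \ref{thm:main} then gives permanence, and Theorem \ref{thm:robust_permanence} gives robust permanence.

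\textbf{Main obstacle.} The delicate step is the rigorous justification of the Morse decomposition. One must show that $E\cap\Gamma_\pi$ contains nothing beyond the axes' segments between $0$ and the fixed points. One must also show that each singleton is isolated as an invariant set of $\pi$ in $\R^2_+$, and not merely within its axis. I would argue this through monotonicity of the one-dimensional Poincar\'e map: it is increasing and concave-like, with exactly the two fixed points $0$ and $x^*$. Together with the ordering $M_1,M_2$ before $M_3$, this gives the required Morse ordering $\omega\subset M_{1,2}$ and $\alpha\subset M_3$.
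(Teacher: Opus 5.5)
Your proposal is correct and follows the paper's proof essentially step for step: the same three-set Morse decomposition (the two axial fixed points of $\pi$ and the origin), the same one-period evaluations via Lemma \ref{lemma:periodic_integral_of_lv_1d} (your weight $\varepsilon$ is immaterial because the resident's term vanishes, and the paper simply takes all weights equal to $1$), and the same appeal to Theorems \ref{thm:main} and \ref{thm:robust_permanence}; your checks of \textbf{A1}--\textbf{A2} and of isolation go beyond what the paper writes.

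Two small refinements. First, for $h_i>0$ the shortcut bound $\max(1,1+h_i)a_i/b_{ii}$ plus a $\tau$-independent margin fails, because the post-pulse value $x_i^*(0^+)=\frac{a_i}{b_{ii}}\frac{(1+h_i)e^{a_i\tau}-1}{e^{a_i\tau}-1}$ exceeds $(1+h_i)a_i/b_{ii}$ by $\frac{a_ih_i}{b_{ii}(e^{a_i\tau}-1)}$; take $K_i\ge x_i^*(0^+)$ via your comparison argument instead. Second, isolation of an axial fixed point in all of $\R^2_+$ (not just within its axis, which is all that one-dimensional monotonicity gives) follows from hyperbolicity: the Jacobian of $\pi$ there is triangular, with axial multiplier $e^{-(a_i\tau+\ln(1+h_i))}<1$ and transversal multiplier equal to the exponential of the positive invasion quantity you already computed.
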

\begin{proof}
    Let $\pi$ be the discrete time map for (\ref{eqxn:lv_cancer_model}) defined as in (\ref{discretemap}).
    
    We first construct an appropriate Morse decomposition $\mathcal{M}_\pi$. The extinction set $E$ of (\ref{eqxn:lv_cancer_model}) is the set $\{(x_r,x_s)| x_rx_s=0\}$. By (\ref{eqxn:lv_cancer_model_permanence_req_1}) and Lemma \ref{lemma:jin_1d_periodic_soln}, there are unique $\tau$-periodic solutions $x_r^*(t)$ and $x_s^*(t)$ in the sets $\{(0, x_r)|x_r>0\}$ and $\{((x_s,0)|x_s>0\}$, respectively.  From these periodic solutions, we define the sets $M_1$ and $M_2$ in $E$. Then, if $\Gamma$ is the global attractor on (\ref{eqxn:lv_cancer_model}), a valid Morse Decomposition for $E\cap\Gamma$ is  
    $$
    \mathcal{M}_\pi = \{ M_1, M_2, M_3 \}
    $$ with
    where $M_3=\{(0.0)\}$.
    Now, for each $k \in \{1, 2, 3\}$ and $M_k \in \mathcal{M}$ we consider the existence of $p_{k1},p_{k2} >0$ satisfying condition (\ref{eqxn:persistence_inequality_pulsed}).

    \textit{($k = 1$)} 
    Let  $z \in M_1$, so $z = (0,x_s^*(\tau))$. Recall $x_s^*(t)$ is a $\tau$-periodic solution of (\ref{eqxn:lv_cancer_model}) so for all $s \in \R$,
    $$
    \Phi(\pi(z,0),s) = \Phi(\pi(z,j),s), \; j\in \N
    $$ By Lemma \ref{lemma:periodic_integral_of_lv_1d}.\ref{eqxn:periodic_integral_of_f} we have
    $$
    \int_0^\tau f_s(\Phi (\pi(z,j),s)) + \frac{\ln{(1+h_s)}}{\tau} \; dt = -\ln{(1+h_s)} + \ln{(1+h_s)} = 0
    $$
    Likewise, expanding $f_r$ and applying Lemma \ref{lemma:periodic_integral_of_lv_1d}.\ref{eqxn:periodic_integral_of_x} we obtain
    $$
    \int_0^\tau a_r + b_{rs}x_s(t) +b_{rr}x_r(t) \; dt = a_r\tau + b_{rs}\left(-\frac{a_s\tau + \ln{(1+h_s)}}{b_{ss}}\right) + \ln{(1 + h_r)}
    $$ So combined we see that the following expressions are equivalent
    \begin{equation*}
        \begin{split}
            p_{1s} \int_0^\tau f_s(\Phi (\pi(z,j),s)) + \frac{\ln{(1+h_s)}}{\tau} \; ds + p_{1r} \int_0^\tau f_r(\Phi (\pi(z,j),s)) + \frac{\ln{(1+h_r)}}{\tau} \; ds=\\
            p_{1r} \left[a_r\tau + b_{rs}\left(-\frac{a_s\tau + \ln{(1+h_s)}}{b_{ss}}\right) + \ln{(1 + h_r)} \right]\\
        \end{split}
    \end{equation*}
    Therefore, since $a_r\tau + \ln{(1 + h_r)} > \left(a_s\tau + \ln{|1 + h_s|}\right)\frac{b_{rs}}{b_{ss}}$, for $p_{1s}, p_{1r} = 1$, inequality (\ref{eqxn:persistence_inequality_pulsed}) is satisfied for all $z \in M_1$. \\
    
    \textit{($k = 2$)}  
    By symmetry, we may interchange $x_s$ and $x_r$ in the $k = 1$ case above to see that $a_s\tau + \ln{|1+h_s|} > \left(a_r\tau + \ln{|1 + h_r|}\right)\frac{b_{sr}}{b_{rr}}$ implies
    \begin{equation*}
        p_{2s}\int_0^{\tau} f_s(\Phi(\pi(z,j),s)) + \frac{\ln{(1+h_s)}}{\tau}\; ds +  p_{2r} \int_0^{\tau} f_r(\Phi(\pi(z,j),s)) + \frac{\ln{(1+h_r)}}{\tau}\; ds > 0
    \end{equation*}
    is true for $p_{2s}, p_{2r} = 1$ and any $z \in M_2$.

    \textit{($k = 3$)}
    Expanding $f_s(z), f_r(z)$ with $z = (0,0)$ gives $f_i(z) = a_i$. Hence, we have
    \begin{equation*}
    \begin{split}
        p_{3s} \int_0^{\tau} f_s(\Phi(\pi(z,j), s)) + \frac{\ln{(1+h_s)}}{\tau}\; ds +  p_{3r} \int_0^{\tau} f_r(\Phi(\pi(z,j), s)) + \frac{\ln{(1+h_r)}}{\tau}\; ds \\
        =
        p_{3s} (a_s \tau + \ln{(1+h_s)}) + p_{3r}( a_r \tau + \ln{(1+h_r)})
    \end{split}
    \end{equation*}
    So by (\ref{eqxn:lv_cancer_model_permanence_req_1}) for $p_{3s}, p_{3r} = 1$, (\ref{eqxn:persistence_inequality_pulsed}) holds for $z = (0,0)$. 
    
    Therefore for each $M_k \in \mathcal{M}$, there are $p_{ks}, p_{kr} > 0$ satisfying (\ref{eqxn:persistence_inequality_pulsed}), and by Theorem \ref{thm:main} and Theorem \ref{thm:robust_permanence}, (\ref{eqxn:lv_cancer_model}) is robustly permanent.
\end{proof}

We find it instructive to compare this result with permanence conditions in the classic Lotka-Volterra competition model. Recall, that for two species the following two conditions give permanence:
$$
a_s > a_r\frac{b_{sr}}{b_{rr}}, \; a_r > a_s\frac{b_{rs}}{b_{ss}}
$$
These inequalities can also be found by taking $h_s = h_r = 0$ and $\tau = 1$ in condition (\ref{eqxn:lv_cancer_model_permanence_req_2}). Thus, intuitively we can interpret the expressions $a_s\tau +\ln{(1+h_s)}$ and $a_r\tau + \ln{(1+h_r)}$ as the adjusted intrinsic reproductive rate in the pulsed system (\ref{eqxn:lv_cancer_model}).

Moving to biological interpretation, we now illustrate how the results of Theorem \ref{thm:main} may be used to predict treatment outcomes in cancer maintenance chemotherapy.

Assuming that both the chemo-resistant and chemo-sensitive cells are naturally present in some proportion prior to beginning treatment ($x_r(0), x_s(0) > 0$), we examine three cases arising in cancer systems with competing resistant and sensitive cells:

\noindent \textbf{Case 1:} resistant-type cells are identical to and coexist with sensitive cells in the untreated system \\
\noindent \textbf{Case 2:} resistant-type cells are excluded in the untreated system \\
\noindent \textbf{Case 3:} resistant-type cells exclude sensitive-type cells in the untreated system\\ 

\noindent To reflect the differences in chemo-sensitivity between $x_s$ and $x_r$, we let $h_0 \leq 0$ be the baseline intensity of the chemo dose applied with $h_s = h_0$ and $h_r = -|h_0|^2$. 

Starting with Case 1, consider the dynamics of a cancer system where the resistant phenotype has no effect on system dynamics in the absence of chemo treatment, that is $a_s=a_r$, $b_{ss} = b_{rr}$, and $b_{sr}=b_{rs}$ as in Figure \ref{fig:cancer_unpulsed_coexistence}A. Plotting conditions (\ref{eqxn:lv_cancer_model_permanence_req_1}) and (\ref{eqxn:lv_cancer_model_permanence_req_2}), we obtain $\tau$ and $h_0$ parameter ranges leading to permanence of only chemo-resistant cells and system permanence, respectively (note that mutual extinction of chemo-resistant and chemo-sensitive cells occurs when neither condition (\ref{eqxn:lv_cancer_model_permanence_req_1}) nor (\ref{eqxn:lv_cancer_model_permanence_req_2}) is met).  For sufficiently strong or frequent interventions, all cancer cells are eliminated (Figure \ref{fig:cancer_unpulsed_coexistence}B). However, this may not always be possible due to resource limitations or patient-tolerance.
Another part of parameter space reflects treatments that yield outcomes in which resistant cells outcompete sensitive cells. 
This scenario is undesirable from a medical perspective as it corresponds to a system where chemotherapy is no longer an effective form of cancer control.
Hence, there is motivation for adopting a more moderate treatment plan, where coexistence of sensitive and resistant type cells is expected and the goal is to reduce overall cancer load.

\begin{figure}[h]
    \centering
    \includegraphics[width=1\linewidth]{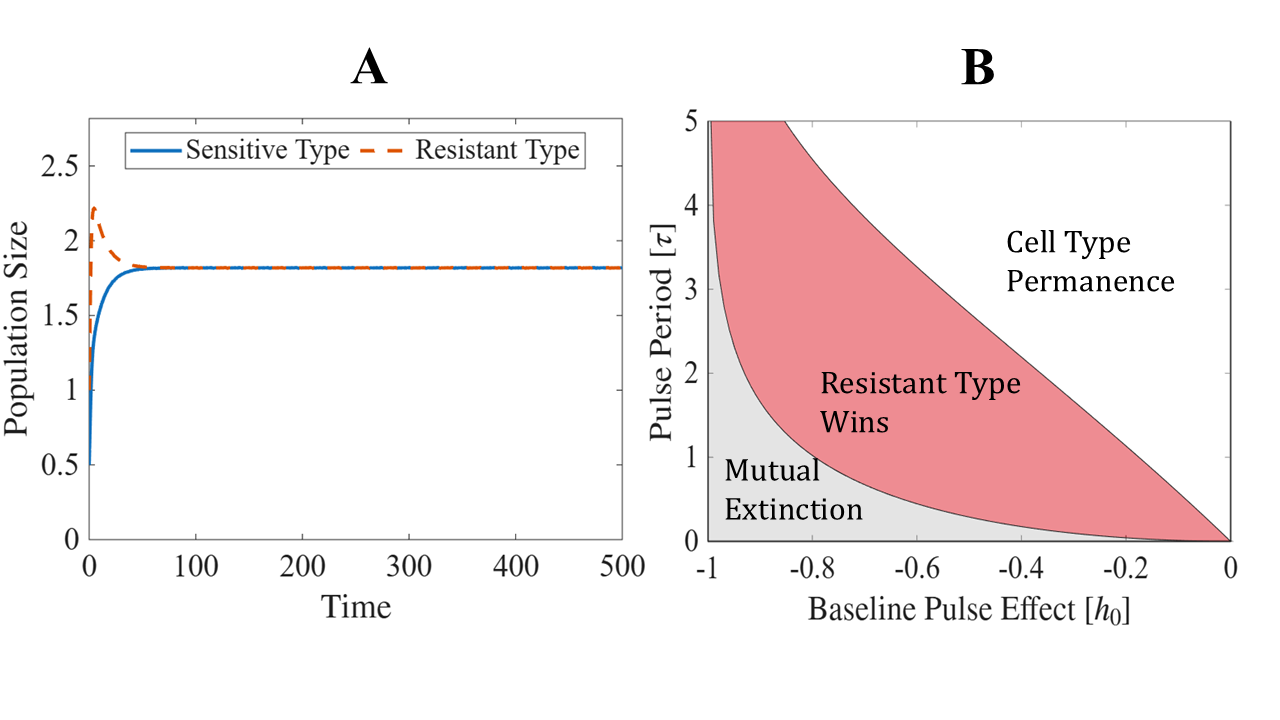}
    \caption{Pulsed treatment dynamics for a cancer system that is permanent in the absence of a pulse. In (A), the numerical solution for a representative permanent, untreated system with $a_s = a_r = 1$, $b_{ss} = b_{rr} = 0.3$, $b_{rs}=b_{sr} = 0.25$, and initial conditions $x_s(0) = 0.5$ and $x_r(0) = 1$. In (B), a bifurcation plot with the same choice of parameterization varying the baseline, pulsed, treatment intensity and pulse period, where the pulse effect on chemo-resistant type and chemo-sensitive type cells is $h_r = -h_0^{2}$ and $h_s = h_0$, respectively.}
    \label{fig:cancer_unpulsed_coexistence}
\end{figure}

Next, we examine Case 2 where success of resistant-type cancer cells only occurs in treatment conditions, that is $\lim_{t\to \infty} x_r(t) = 0$ when $h_0=0$ as in Figure \ref{fig:cancer_unpulsed_resistant_disadvantage}A. Such an outcome can be achieved by various parameter choices. For simplicity, we assume intraspecies competition is uniform across all cell types ($b_{ss} = b_{rr} = 0.3$) and divide our analysis into two subcases:\\

\noindent\textbf{Case 2A:} Sensitive type cells place larger interspecies, competitive pressure on resistant cells ($b_{rs} = 0.4, b_{sr} = 0.25$) but both cell types have similar reproductive rates ($a_s = a_r =1$).\\
\noindent\textbf{Case 2B:} Chemo-resistant cells grow slower in the absence of pulsed therapy ($a_r = 0.5 , a_s = 1$) but cell types are otherwise identical ($b_{rs} = b_{sr} = 0.25$).\\

Both Case 2A and 2B occur in real cancer systems, where chemo-resistance can be associated with increased dependence on certain cell nutrients, possibly increasing intra- or inter-cellular competition (Case 2A) \cite{broxterman_nutritional_cancer_1988,strobl_cost_of_resistance_2021,gatenby_cost_of_resistance_2009}, or delayed doubling times in resistant cells  (Case 2B) \cite{replogle_delayed_growth_cancer_2020}.
Fixing our choice in unpulsed system parameters, and plotting inequalities (\ref{eqxn:lv_cancer_model_permanence_req_1},\ref{eqxn:lv_cancer_model_permanence_req_2}) in the $\tau-h_0$ phase space reveals very different cancer system dynamics between subcases (Figures \ref{fig:cancer_unpulsed_resistant_disadvantage}B \ref{fig:cancer_unpulsed_resistant_disadvantage}C). In Case 2A, Figure \ref{fig:cancer_unpulsed_resistant_disadvantage}B, for $(\tau, h_0) \in [0,5] \times[-1,0]$ there is a higher risk of selecting for resistant-type cells by an improper combination of dosage and treatment period. By contrast, restricting $(\tau, h_0)$ to the same ranges in Case 2B,  Figure \ref{fig:cancer_unpulsed_resistant_disadvantage}C suggests that pulsed therapy can be used to exclude the resistant species in most instances, with coexistence of both cell types or selection for the resistant-type possible in only small regions of the $\tau$-$h_0$ plane. 
Our main result illustrates the delicacy of system permanence and potential value-added to patient care by differentiating between qualitatively similar unpulsed cancer system dynamics, like Case 2A and 2B.

\begin{figure}
    \centering
    \includegraphics[width=1\linewidth]{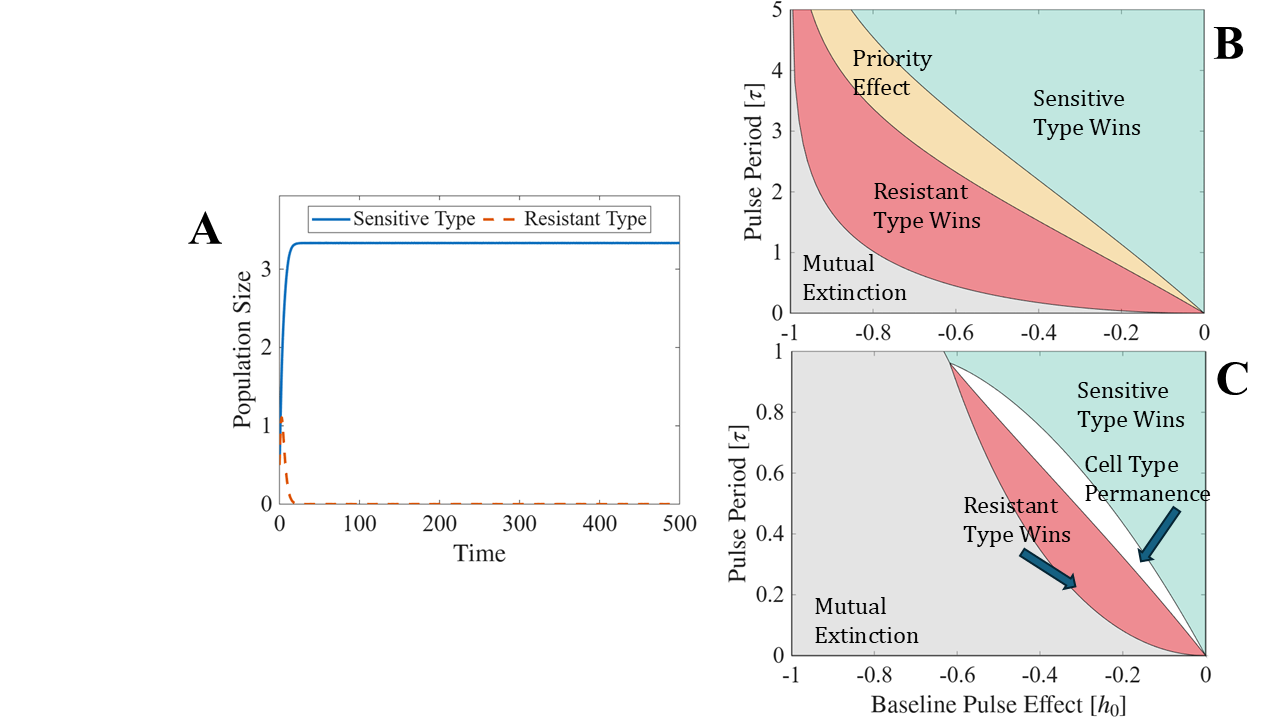}
    \caption{Pulsed treatment dynamics for a cancer system where chemo-resistant type cells lose their competitive advantage in the absence of treatment. In (A), the numerical solution for a representative, untreated system with $a_s = a_r = 1$, $b_{ss} = b_{rr} = 0.3$, $b_{sr}=0.25$,$=b_{rs} = 0.4$, and initial conditions $x_s(0) = 0.5$ and $x_r(0) = 0.5$. In (B), a bifurcation plot varying treatment intensity and pulse periods for the same system where sensitive type cells outcompete chemo-resistant cells in untreated conditions. The region labeled priority effect denotes the parameter space where either the chemo-resistant or chemo-sensitive type population can be excluded depending on initial condition. In (C), a similar bifurcation plot this time where sensitive-type cells have a growth advantage in the untreated system, with $a_r = 0.5$ and $b_{sr}=b_{rs} = 0.25$.
    }
    \label{fig:cancer_unpulsed_resistant_disadvantage}
\end{figure}

Finally, we turn to Case 3, a scenario where resistant-type cells exclude sensitive-type cells in the unpulsed system. An example of a numerical solution exhibiting such behavior is shown in Figure \ref{fig:cancer_unpulsed_resistant_advantage}A. Again, we restrict our focus to two subcases:\\

\noindent\textbf{Case 3A:} Resistant type cells place larger interspecies, competitive pressure on sensitive cells ($b_{sr} = 0.4, b_{rs} = 0.25$) but both cell types have similar reproductive rates ($a_s = a_r =1$).\\
\noindent\textbf{Case 3B:} Chemo-resistant cells grow faster in the absence of pulsed therapy ($a_r = 1 , a_s = 0.5$) but cell types are otherwise identical ($b_{rs} = b_{sr} = 0.25$).\\

As before, there are real chemo-resistance pathways in cancer cells that might be modeled generically by subcases 3A and 3B \cite{zheng_cancer_resistance_overview_2017}. 
Constructing $\tau-h_0$ bifurcation diagrams as in previous cases, results in similar plots in both Case 3A and 3B--- treatment outcomes are limited to mutual extinction and exclusion of the sensitive species (Figure \ref{fig:cancer_unpulsed_resistant_advantage}B and \ref{fig:cancer_unpulsed_resistant_advantage}C).
The information generated by applying our permanence result could still be used to guide treatment decisions and help patients understand their best options.

\begin{figure}
    \centering
    \includegraphics[width=1\linewidth]{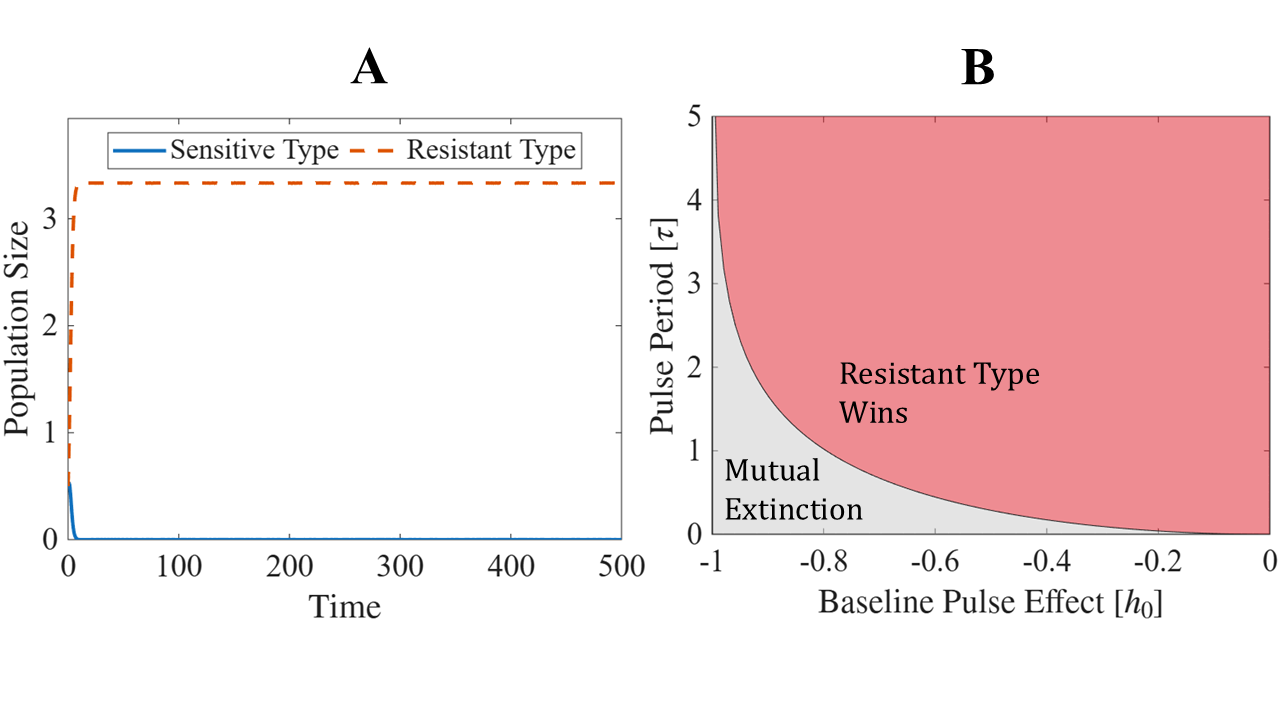}
    \caption{Pulsed treatment dynamics for a cancer system where chemo-resistant type cells have a fitness advantage and exclude sensitive type cells in the absence of a pulse. In (A), the numerical solution for a representative, untreated system with $a_s = a_r = 1$, $b_{ss} = b_{rr} = 0.3$, $b_{sr}=0.4$,$=b_{rs} = 0.25$, and initial conditions $x_s(0) = 0.5$ and $x_r(0) = 0.5$. In (B), a bifurcation plot using the same parameters, now varying the baseline, pulsed, treatment intensity and pulse period, where the pulse effect on chemo-resistant type and chemo-sensitive type cells is $h_r = -h_0^{2}$ and $h_s = h_0$, respectively.}
    \label{fig:cancer_unpulsed_resistant_advantage}
\end{figure}

Overall, we note that an advantage of our main theorem is that it can be applied to a broader set of cancer growth models. Moreover, in all cases, the robust permanence of system (\ref{eqxn:lv_cancer_model}) following from Theorem \ref{thm:robust_permanence} is useful from a treatment standpoint because there is inherent uncertainty when parameterizing prediction models with real patient data. 

\subsection{Parasitoid Control in a Two Species Predator-Prey Model}
{\it Drosophila suzukii}, or the Spotted Winged Drosophila (SWD), is an agricultural pest invasive to Europe and North America \cite{wang_swd_review_2026}. Its partiality to laying eggs in fresh, soft-skinned berries and stone fruit is the cause of serious economic damage in the farming industry \cite{wang_swd_review_2026}. Pesticides are one of the most common methods used to control SWD, but this strategy's effectiveness has been limited by regulations on where and how often farmers may apply treatment \cite{lee_swd_pesticide_2015, haviland_mrl_swd_2012,goodhue_swd_crop_rejection_2011}. On top of this, there are arguments in favor of reducing pesticide usage for practical reasons including resistance development and adverse ecosystem effects \cite{gress_swd_resistance_2019}.

To this end, pesticide treatment augmented with a parasitoid control has been proposed as a viable alternative to pesticide treatment alone \cite{wang_parasitoid_2020,wang_swd_review_2026}. A parasitoid is an insect that uses its host's body to develop through its own life-stages and ultimately destroys its host. Consequently, it may act as a biological aid to control pests. For SWD, wasp parasitoids native to the fly's region-of-origin (Japan) have been studied as a candidate biocontrol agent \cite{wang_swd_review_2026}. In addition to reducing the amount of pesticide treatment necessary to control pests in crops, it has been suggested that these types of plans may reduce resistance development by attacking resistant populations \cite{lee2019biological,liu2014natural, rossistacconi2019augmentative}.

Impulse control models of predator-prey systems have been studied in some contexts \cite{heimpel_biocontrol_overview_2018, xie_predator_prey_example_2017,wang_predator_prey_example_2008}. In the context of SWD--parasitoid dynamics, timing for parasitoid release has been a focus, with possibly multiple impulsive releases \cite{pfab_optimized_2018, mailleret_pulsed_biocontrol_2009}. Here we study a model that focuses on an integrated pest-management strategy combining parasitoid and pesticide control. We focus on simple Lotka-Volterra predator-prey dynamics, 

Our system is 
    \begin{align}
         &\frac{\rd x}{\rd t} = x\left(-a_x + b_{yx}y\right), & t \neq k\tau, k \in \N \label{sys:LVPredatorPreya}\\
&\frac{\rd y}{\rd t} = y\left(a_y - b_{xy}x - b_{yy}y\right),  & t \neq k\tau\\
&x(k\tau^+)= (1 + h_x)x(k\tau^-), &y(k\tau^+)= (1 + h_y)y(k\tau^-)\label{sys:LVPredatorPrey}
    \end{align}
where $x$ is the population size of the parasitoid (predator), and $y$ the SWD (prey). The parameters $b_{yx},b_{xy} > 0$ represent the predation coefficient for the predator and prey, respectively. The parameter $b_{yy}>0$ denotes the intraspecies competition coefficient for the prey. We consider a case where there is no impulsive release of the parasitoid, and the pesticide effects both the populations negatively, i.e. $1 + h_i < 0.$ 

We begin by applying Theorem \ref{thm:main}. 
\begin{theorem}[Permanence for the Lotka-Volterra Predator-Prey System]
When $1 + h_x < 1$ and $\frac{a_x}{b_{yx}} > \frac{a_y}{b_{yy}},$ System (\ref{sys:LVPredatorPreya}- \ref{sys:LVPredatorPrey}) is permanent if 
$$
a_y \tau + \ln(1 + h_y ) > 0
$$
and
$$-a_x \tau + \ln(1 + h_x) > -\frac{b_{yx}}{b_{yy}}\left[a_y \tau + \ln(1 + h_y )\right].$$

\end{theorem}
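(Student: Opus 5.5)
The plan is to verify the hypotheses of Theorem~\ref{thm:main} and apply it directly, much as in the proof for the competition model (\ref{eqxn:lv_cancer_model}). Here $f_x(x,y)=-a_x+b_{yx}y$ and $f_y(x,y)=a_y-b_{xy}x-b_{yy}y$. I will treat the pulse as pesticide acting on both species, so $1+h_x<1$ and $1+h_y\le 1$.

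The first step is the dissipativity assumption \textbf{A2}; \textbf{A1} is immediate for polynomial right-hand sides. On the $y$-nullcline dynamics, comparison with the one-dimensional logistic system (\ref{eqxn:lv_1d}) gives $y(t)\le \bar y(t)$, where $\bar y$ solves $\dot{\bar y}=\bar y(a_y-b_{yy}\bar y)$ with the same pulses. By Lemma~\ref{lemma:jin_1d_periodic_soln}, $\bar y$ is eventually bounded. For the predator I use the weighted sum $W=b_{xy}x+b_{yx}y$. Between pulses it satisfies
\[
\dot W = b_{yx}y(a_y-b_{yy}y)-a_xb_{xy}x \le C-\mu W
\]
for suitable constants $C,\mu>0$. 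At a pulse $W$ does not increase, since $1+h_i\le 1$. A standard Gronwall-type argument across each period then yields an absorbing compact set $Q=\{W\le K\}$ for $\pi$ with $\pi(Q,\N)\subset Q$. This step is routine but must be done carefully because of the pulses; I expect it to be the most technical part.

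Next I determine $E\cap\Gamma$ and a Morse decomposition for $\pi$. On the axis $y=0$ we have $\dot x=-a_xx$ and $1+h_x<1$, so every orbit converges to the origin. Every point with $x>0$ has an unbounded backward orbit, so it does not lie in $\Gamma$. On the axis $x=0$ the dynamics is exactly (\ref{eqxn:lv_1d}) with $a=a_y$, $b=b_{yy}$, $h=h_y$. Since $a_y\tau+\ln(1+h_y)>0$, Lemma~\ref{lemma:jin_1d_periodic_soln} gives a unique globally attracting $\tau$-periodic solution $y^*$, and the origin repels within this axis. Hence $E\cap\Gamma$ is a segment of the $y$-axis, and $\mathcal{M}_\pi=\{M_1,M_2\}$ is a Morse decomposition, where $M_1=\{(0,y^*(0^+))\}$ and $M_2=\{(0,0)\}$. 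The remaining points of the segment have $\omega$-limit in $M_1$ and $\alpha$-limit in $M_2$.

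Finally I check inequality (\ref{eqxn:persistence_inequality_pulsed}) with $\ell_x=0$.

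At $M_2$, the origin, the left side of (\ref{eqxn:persistence_inequality_pulsed}) equals
\[
p_{2x}\bigl(-a_x\tau+\ln(1+h_x)\bigr)+p_{2y}\bigl(a_y\tau+\ln(1+h_y)\bigr).
\]
The first bracket is negative, but taking $p_{2y}=1$ and $p_{2x}>0$ sufficiently small makes the sum positive.

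At $M_1$, the orbit is $(0,y^*(t))$. By (\ref{eqxn:periodic_integral_of_f}) of Lemma~\ref{lemma:periodic_integral_of_lv_1d}, the $y$-term contributes exactly $0$. By (\ref{eqxn:periodic_integral_of_x}), $\int_0^\tau y^*\,dt=\frac{a_y\tau+\ln(1+h_y)}{b_{yy}}$, so the $x$-term equals
\[
-a_x\tau+\ln(1+h_x)+\frac{b_{yx}}{b_{yy}}\bigl(a_y\tau+\ln(1+h_y)\bigr),
\]
which is positive by the second hypothesis. Taking $p_{1x}=p_{1y}=1$ therefore works. Theorem~\ref{thm:main} then gives permanence, and Theorem~\ref{thm:robust_permanence} even gives robust permanence.

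The condition $\frac{a_x}{b_{yx}}>\frac{a_y}{b_{yy}}$ does not appear to be needed in this argument. Its role is presumably to ensure that the unpulsed predator cannot invade, so that the result genuinely reflects the effect of the pulse.
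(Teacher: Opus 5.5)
Your proposal is correct and is essentially the paper's own argument: the same Morse decomposition $\{(0,y^*),(0,0)\}$, Lemma~\ref{lemma:periodic_integral_of_lv_1d} to evaluate both integrals at $M_1$, and prey-weighted coefficients at the origin (the paper's choice of $p_{2x}=1$ with large $p_{2y}$ is your choice rescaled); your verification of \textbf{A2} and of $E\cap\Gamma$ is detail the paper skips. Your closing aside is off, though: under your assumption $1+h_y\le 1$, the hypotheses $a_x/b_{yx}>a_y/b_{yy}$ and $1+h_x<1$ give $\tau\bigl(\frac{b_{yx}a_y}{b_{yy}}-a_x\bigr)+\ln(1+h_x)+\frac{b_{yx}}{b_{yy}}\ln(1+h_y)<0$, so the second inequality can never hold and the statement is vacuous there (as the paper's post-proof remark implies); it has content only for $h_y>0$, and there your pulse step for $W$ must be adjusted, because $W$ can then jump up by as much as $b_{yx}h_y y$, a quantity your logistic comparison for $y$ keeps bounded.
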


\begin{proof}
We derive these conditions from our main result. Let $E := \{(x,y): xy =0\}$ and let $\Gamma$ be the global attractor of (\ref{sys:LVPredatorPreya}- \ref{sys:LVPredatorPrey}). When $1 + h_x < 1,$ then the Morse decomposition of $E \cap \Gamma$ for system (\ref{sys:LVPredatorPreya}- \ref{sys:LVPredatorPrey}) is 
$$
\mathcal{M_\pi} = \{M_1, M_2 \}
$$
where 
$$
M_1 =\{(0, y^*(\tau))\}, M_2 = \{(0,0)\}
$$
and $y^*(t)$ is the unique, globally asymptotic, periodic solution of the system in the absence of the predator (i.e., $x = 0.$). Recall by Lemma \ref{lemma:jin_1d_periodic_soln}, that such a solution exists when $a_y\tau + \ln(1 + h_y) > 0$. 

For each $M_k$ consider the existence of $p_{kx}, p_{ky} > 0$ satisfying inequality (\ref{eqxn:persistence_inequality_pulsed}) for system (\ref{sys:LVPredatorPreya}- \ref{sys:LVPredatorPrey}).

\textit{($k=1$)}
By Lemma \ref{lemma:periodic_integral_of_lv_1d}.\ref{eqxn:periodic_integral_of_f} on for initial condition $(0,y^*(\tau))$ 
$$ \int_0^\tau f_y(z(s)) + \frac{\ln(1 + h_y)}{\tau} \ \;d s = 0.$$ 

Additionally since, by definition, we have
$$\int_0^\tau f_y(z(s)) \ \; d s = \int_0^\tau a_y - b_{yy}y \ \;d s,$$
on $M_1$, we find that
\begin{equation}
\int_0^\tau y(t) \ \rd t = \frac{a_y\tau + \ln(1 + h_y)}{b_{yy}}.
\end{equation}
Thus, 
\begin{equation}\label{eqxn:predator_prey_IntxwonPhiw}
\begin{split}
\int_0^\tau f_x(z(s)) \ \;d s &= \int_0^\tau -a_x + b_{yx} y \ \; d s \\
&= -a_x\tau + \frac{b_{yx}}{b_{yy}}(a_y\tau + \ln(1 + h_y)). 
\end{split}
\end{equation}
Now applying our main result and letting $p_{1x} = p_{1y} = \frac{1}{2}$, we see that 
\begin{equation} 
\begin{split}
\sum^2_{i=1} p_{ki} \int_0^{\tau} f_i(z) + \frac{\ln{(1+h_i)}}{\tau} \; ds > 0 \iff \int_0^{\tau} f_x(z(s)) + \frac{\ln(1 + h_x)}{\tau} \ \; d s > 0.
\end{split} 
\end{equation}
Evaluating the integral on the right-hand-side of the equivalence and using Equation (\ref{eqxn:predator_prey_IntxwonPhiw}) we find the condition
\begin{equation} \label{eqxn:predator_prey_full_perm_condition}
-a_x \tau + \ln(1 + h_x) + \frac{b_{yx}}{b_{yy}}\left[a_y \tau + \ln(1 + h_y )\right] > 0.
\end{equation}

\textit{($k = 2$)}
Choose $p_{2x} = 1$ and $p_{2y} = \frac{2|-a_x\tau + \ln{(1+h_x)}|}{a_y\tau + \ln{(1+h_y)}}$. Then
$$
\sum^2_{i=1} p_{ki} \int_0^\tau f_i(z(s)) + \frac{\ln(1+h_i)}{\tau} \; ds =
-a_x\tau + \ln{(1+h_x)} + 2|-a_x\tau + \ln{(1+h_x)}|
$$
which is greater than zero, so condition (\ref{eqxn:persistence_inequality_pulsed}) is satisfied. 

This concludes the proof. 
\end{proof}

We can understand this permanence condition by relating back to the untreated predator-prey system. That is, the differential system
\begin{equation}
\begin{split}
\frac{\rd x}{\rd t} = x\left(-a_x + b_{yx}y\right)\\
\frac{\rd y}{\rd t} = y\left(a_y - b_{xy}x - b_{yy}y\right)
\end{split}
\end{equation}.

Phase-plane analysis shows that if 
$$\frac{a_y}{b_{yy}} > \frac{a_x}{b_{yx}},$$
then the system without a pulse has three equilibria, the trivial extinction state, a prey-only equilibrium, and a coexistence state. Conversely, if 
$$\frac{a_y}{b_{yy}} < \frac{a_x}{b_{yx}},$$
then the coexistence state vanishes, and the only two equilibria are the prey-only equilibrium and the trivial extinction state. 

We observe from permanence conditions (\ref{eqxn:predator_prey_full_perm_condition}) that system (\ref{sys:LVPredatorPreya}-\ref{sys:LVPredatorPrey}) cannot be permanent when pesticide treatment is detrimental to both parasitoid and SWD unless the unpulsed system is also permanent. In other words, the permanence of the untreated system dictates whether permanence can be achieved in the treated system. This is a departure from the earlier maintenance chemotherapy example, where we saw that single-species exclusion behavior in the untreated case could be transformed into permanent type behavior.


\begin{figure}
    \centering
    \includegraphics[width=0.75\linewidth]{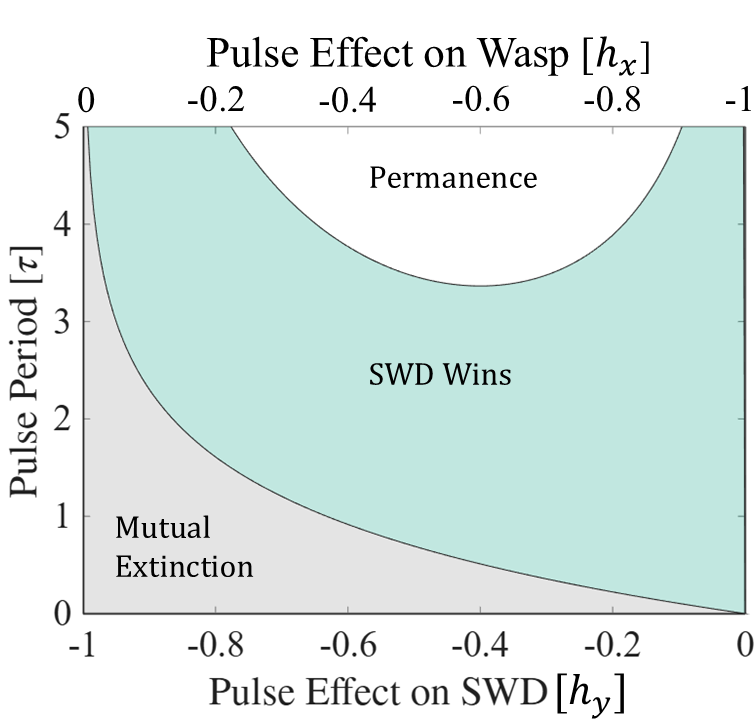}
    \caption{
    Bifurcation plot of SWD-wasp parasitoid system (\ref{sys:LVPredatorPrey}) dynamics varying pulse period length and treatment specificity, where $h_y + h_x = -1$. The unpulsed system parameters are $a_y=a_x = 1$, $b_{yy}=\frac{1}{3}$, $b_{yx} = \frac{1}{2}$, and $b_{xy}=-\frac{1}{2}$.
    }
    \label{fig:predator_prey_phase_bifurcation}
\end{figure}

To examine the extent to which this correlation holds as the pulse effect becomes more or less specific to SWD, we construct a bifurcation plot comparing choices of $h_x, h_y \leq -1$ over various pulse periods using condition (\ref{eqxn:predator_prey_full_perm_condition}). Here, we assumed $h_y+h_x=1$ to mimic that some pesticides have differential impacts on different species. 

Interestingly, as in the unpulsed predator-prey system, we see that when pulsed permanence is possible there is an optimal range for biocontrol applications in which the pesticide is neither too effective against wasp nor SWD. This suggests that treating SWD too strongly may lead to tradeoffs in the efficacy of biocontrol. 
\cite{heimpel_biocontrol_overview_2018}.

\section{Discussion}\label{sect:discussion}
Motivated by the problem of controlling harmful populations via pulsed interventions (such as in insecticide sprays in agricultural settings or chemo-therapies in cancer), 
in the present paper we examine  the permanence and robust permanence of an n-species periodically impulsive population dynamic model. In particular, we first provide  a sufficient condition for determining a permanent system, i.e., that all $n$ species will persist despite periodic pulsed interventions. This condition is particularly helpful due to its disentangling of the temporally discrete pulse, and the continuous growth dynamics. One particular challenge of the analysis is the nonautonomous nature of the n-species system. We overcome this challenge with a series of mappings which connect the nonautonomous $n$-species system to an n-species autonomous dynamical system via a continuous, periodic extension of the impulsive dynamics. We prove that these dynamical systems share key properties, and thus can study the permanence of one to infer permanence of the other. Permanence relies on a balance between the relative strength of the impulsive events measured against the frequency of such events, and the continuous demographics otherwise. 

Secondly, we demonstrate that the permanence property  is, under certain conditions, robust to perturbations in the continuous dynamics $(f_i(x))$, period length ($\tau$) and impulse control impact  ($h$).  This result acknowledges that our population dynamics models are imperfect descriptions of real populations and highlights that, provided our descriptions are sufficiently close, we still have permanent systems.  Specifically, in practice, parameters, such as the control efficacy ($h$) are often estimated in idealized laboratory experiments that lead to parameter values that deviate from ``the field”. The conditions we define in part 1 for permanence are also sufficient to show permanence of (appropriately defined) perturbed models. 


Since the concept arose, in the late 70s \cite{freedman1977threelevelfoodchain, gard1980persistencefoodwebs, gard1984conferencepersistence}, conditions for permanence and robustness have been determined in a variety of different classes of models, including several approaches for continuous-time dynamical systems \cite{garay_hofbauer_2003, HofbauerSchreiber2010Robust, Schreiber2000RobustPermanence, patel_schreiber_2018, HofbauerSchreiber2022InvasionGraphs}, discrete maps \cite{freedman1989persistence, salceanu2009lyapunov, kon2004permanence}, and abstract dynamical systems \cite{garay1989uniform, ButlerWaltman1986Persistence, hofbauer1989uniform} .  Additionally, there has been some key work on permanence in impulse models \cite{ballinger_persistence_of_general_ides_1997, schreiber_flowkick_2025}. Our work contributes to and extends this body of literature. 

Our work is most closely related to two particular works: \cite{ballinger_persistence_of_general_ides_1997} from 1997 and, more recently, \cite{schreiber_flowkick_2025}. 
Ballinger and Liu \cite{ballinger_persistence_of_general_ides_1997} examined the problem of permanence in impulsed population models of the form (1), but with a more flexible impulse function (where we have a linear function) and times of pulses (where we have periodic pulse times).  They provide sufficient conditions for a function to serve as a Lyapunov function to show permanence.  In other words, their approach to permanence is to find such a function. 
In our work, we use particular functions to serve as Lyapunov functions (first introduced in \cite{garay_hofbauer_2003} and then extended in \cite{patel_schreiber_2018} to write down permanence conditions. By explicitly writing a possible Lyapunov function, we make checking permanence conditions more streamlined.  More recently, Schreiber \cite{schreiber_flowkick_2025} also considered permanence in generalized versions of our equations (termed flow-kick systems by Schreiber) with similar flexibilities.  Our results have overlap with Schreiber’s, as he used a Morse decomposition and an explicit form of Lyapunov functions.  The two key distinctions are that we have a different proof that involves translating into an autonomous system, (thereby directly using previous results).  Secondly, we extend the strength of the permanence results by showing robust permanence (which was conjectured in \cite{schreiber_flowkick_2025}). 



To illustrate the utility of our results, we examine two case study examples that describe the pulsed control of harmful populations. These examples highlight how community interactions along with pulsed interventions that impact multiple community members simultaneously, may lead to nuanced predictions on the population dynamics under control.
In cases in which control of harmful populations involves pulsed interventions, we may be able to parameterize and tune either the timing or strength of interventions or both.  
Parameterization typically involves careful experimental set ups or inference from empirical data sets. For example, Mermer et al. 2020 \cite{mermer2021timing} quantified the efficacy of several classes of commercially available chemical compounds on different life stages of the SWD. Recently, in D'ovidio Long et al. 2026 \cite{d2026dynamical}, they generated a phenomenological expression for the nonlinear impact of therapies on cancer cells using data. 

Such empirical analysis can then facilitate being able to tune control parameters. For example, in the control of several classes of disease-causing macroparasites, such as lymphatic filariasis, public health agencies are devising protocols on how frequently and what proportion of affected communities to treat in preventative chemotherapies (in which treatments drugs are widely distributed to whole subsets of populations; \cite{world2025monitoring}).  In an agricultural context, growers may also control the period between insecticide spraying, as well as the level to which the chemicals are diluted (that affects our $h$ parameter).  For cases in which control is tunable, our results can be used to identify critical time periods and efficacies that would lead to permanence.

We now take a step back and ask: why would we want permanence?  Typically, in the case of harmful biological agents, such as disease-causing or pests, the objective of control is to eliminate a population. However, in some cases, this may not be feasible and some practitioners have advocated for a more palliative approach (that is, to allow harmful populations to persist but in a manageable manner; \cite{gatenby_cost_of_resistance_2009, whelan2020resistance})  For example, this approach has been suggested in cancer treatments to prevent the problematic recrudescence of disease that occurs from latent but drug resistant cancer cells that emerge from repeated exposure to chemotherapies \cite{gatenby_cost_of_resistance_2009}.  In agricultural settings, this may mean allowing for some pests to persist to also mitigate the evolution of pesticide resistance (e.g. see refuge hypothesis approaches; \cite{carriere2012large}).  In other cases, total elimination of harmful populations may not be possible due to logistical constraints (e.g. in cancer, requiring treatment beyond what is safe for an individual).  In these cases, if we are not able to eliminate a harmful population, it may be best to ensure the persistence of whole communities of populations (e.g. in cancer, healthy cells alongside cancerous cells). 

In addition to the control of harmful populations, our results can be used to understand coexistence of populations in other biological scenarios that involve pulsed dynamics.  For example, many biological species, such as some fish \cite{tao2008dynamic}, bat \cite{lunn2024kenyan},  and flying fox species \cite{mitchell2021birth}, are characterized by pulsed birth events.   On the other hand, other species are vulnerable to natural mass cohort die-offs (e.g. see \cite{smith2011ecological}), pulsed settlement/migration events (e.g. \cite{hamman2018landscape}), or pulsed resource contributions (e.g. \cite{gandhi2025flow}).
Numerous previous work have aimed at modeling these pulsed events, such as in \cite{eskola2007mechanistic, pachepsky2008between, lewis2012spreading}. Our results on permanence offer a general avenue to investigate coarse-grained properties of such systems.


Finally, we highlight some areas that build off this work in which more analysis is needed.  
In this work, we only considered linear impact of pulsed interventions.  It is natural to want to extend to nonlinear impacts as this may be more realistic in many cases (see \cite{d2026dynamical}).  We point out that work in \cite{schreiber_flowkick_2025} does allow for this more flexible pulse impact.  Secondly, our framework is entirely deterministic.  In many cases population dynamics as well as pulsed events (especially when naturally driven such as hurricanes and floods) may be better modeled as stochastic.  What are the conditions needed for the analogous notions of permanence in stochastic impulse systems and how do they compare to these? Finally, in many cases, deliberate control of harmful populations is adaptive in the sense that control protocols can change from the observation of new data in real time.  Modeling and then examining permanence in which control is adaptive is a mathematical challenge with important implications for optimizing both data collection for monitoring and updating control protocols.

\section{Appendix}

\subsection{From Impulse to Autonomous System}

In the first part of the appendix, we provide details on how we relate the Impulse ODE

\begin{equation} \label{appendix_eqn}
    \begin{split}
        \frac{dx_i}{dt} = x_i(t) f_i(x) ,\;  i = 1 \ldots n\; ,\ t \neq k\tau; k \in \N\\
        x_i(k\tau^+) = (1+h_i)x_i(k\tau^-)
    \end{split}
\end{equation}
to an autonomous system

\begin{align}
     \frac{dy_i}{dt} &= y_ig_i(y, \theta)\\
    \frac{d\theta}{dt} &=\frac{2\pi}{\tau+1} \label{auto_app}
\end{align}
with 
\begin{align}
    g_i(y, \theta) = \begin{cases}
        f_i(y), \; \theta \in [0,\frac{2\pi\tau}{\tau+1}) \\
        \ln(1+h_i), \; \theta \in [\frac{2\pi \tau}{\tau + 1},2\pi)
    \end{cases}
\end{align}

The first step is to relate (\ref{appendix_eqn}) to the nonautonomous periodic system

\begin{align}\label{app_cont_pulsed_model}
    \dot{y}_i(y,t) &= y_i(t)g_i(y,t) \\
    g_i(y, t) & = \begin{cases}
    f_i(y), & t \in [k(\tau + 1) , k(\tau + 1) + \tau] \\
        \ln(1 + h_i), & t \in [k(\tau + 1) + \tau , (k+1)(\tau + 1))
    \end{cases}    
\end{align}
where $k\in \N$.  Such a transformation was done in \cite{patel_spectral_2024}. Heuristically, this extends the time over which the pulse acts to one time unit rather than instantaneously (see Figure \ref{fig:sketch_conversion_to_cont}).  There is one-to-one correspondence between solutions of these two systems and importantly, if the permanence property holds in one system it must hold in the other. This transformation relies on the linearity of the pulse impacts. 

\begin{figure}[h]
    \centering
    \includegraphics[width=1\linewidth]{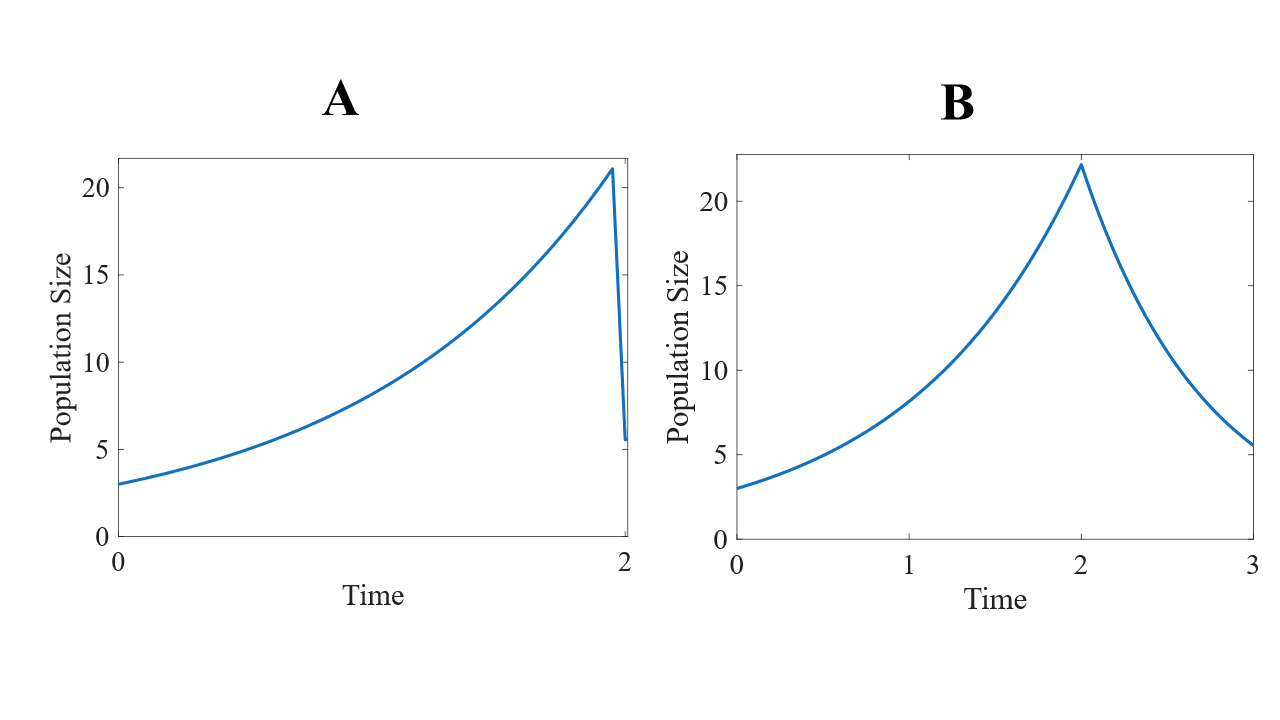}
    \caption{Example conversion between periodically pulsed system (\ref{appendix_eqn}) and periodically time-varying system (\ref{app_cont_pulsed_model}) after one period for $\dot{x(t)} = x$ and $\tau = 2$ with $h = -0.75$. On the left in (A) the original nonautonomous, pulsed solution. On the right (B) the piecewise continuous autonomous system with extended time.}
    \label{fig:sketch_conversion_to_cont}
\end{figure} 

The second step is to relate the nonautonomous system (\ref{app_cont_pulsed_model}) to an autonomous system (\ref{auto_app}). We propose a transformation of the time domain $\Delta$ to a 1-sphere $S^1$ by 
\begin{equation}\label{eqxn:time_transformation}
    \theta(t) = \frac{2\pi t}{\tau + 1}
\end{equation}
Rewriting (\ref{app_cont_pulsed_model}) under this change of variables yields the system of autonomous differential equations \begin{subequations}\label{eqxn:cont_autonomous_model}
    \begin{equation}
    \begin{cases}
        \dot{y}_i (y, \theta) = y_i(t)\overline{g}_i(y, \theta)\\
        \dot{\theta}(y, \theta) = \frac{2\pi}{\tau + 1}
    \end{cases}
    \end{equation}
    where $\overline{g}_i(y, \theta)$ is defined 
    \begin{equation}
        \overline{g}_i(y, \theta) = \begin{cases}
            f_i(y), & \theta \in \left[2\pi k, \frac{2\pi \tau}{\tau + 1}(k+1)\right)  \\
            \ln(1 + h_i), & \theta \in \left[\frac{2\pi \tau}{\tau + 1(k+1)},2 \pi(k+1) \right)
        \end{cases}    
    \end{equation}
     with $k$ here identical to that in system (\ref{app_cont_pulsed_model}) above.
\end{subequations}
This transformation relies on the periodicity inherent in the nonautonomous system. 
\subsection{Robust Permanence Result}
\begin{proof}[Proof of Theorem \ref{thm:robust_permanence}]
We begin by defining robust permanence for the autonomous system (\ref{auto}, \ref{auto2}) as in \cite{patel_schreiber_2018}. Let $Q_\phi \subset \R^n_+ \times S^1$ be a compact, forward invariant set with respect to (\ref{auto},\ref{auto2}). Then $(\tilde{g}, \tilde{\Lambda})$ is in the $(\delta, Q_\phi)$-perturbation of (\ref{auto}, \ref{auto2}) if 
    \begin{enumerate}
        \item For all $i$, $(x,\theta) \in \R^n_+ \times S^1$, and
        $$
        |\tilde{g}_i(x,\theta)-g_i(x,\theta)| < \delta , \; |\tilde{\Lambda} - \frac{2\pi}{\tau + 1}| < \delta
        $$ 
        \item $x\tilde{g}_i$ is piecewise, locally Lipschitz continuous, and
        \item the perturbed dynamical system $\tilde{\phi}$ defined by
        \begin{equation} \label{pert67}
        \begin{split}
            \frac{d\tilde{x}_i}{dt} = x\tilde{g}_i(x,\theta), \; i = 1, \ldots, n\\
            \frac{d\tilde{\theta}}{dt} = \tilde \Lambda 
        \end{split}          
        \end{equation}
        satisfies
        $$
        \bigcup_{s \in \R_+}\tilde{\phi} (Q_\phi, [0,2\pi),[s,\infty)) \subset Q_\phi
        $$
        and for all $(z,\theta) \in \R^n_+ \times S^1$, there exists a $t\in \R_+$ such that $\tilde{\phi}(z,\theta, t) \in Q_\phi$.
    \end{enumerate}
The system (\ref{auto},\ref{auto2}) is robustly permanent if there exists a $\delta, \beta > 0$ such that for all $(\tilde g, \tilde \Lambda)$ in the $(\delta, Q_\phi)$-perturbation of (\ref{auto},\ref{auto2}) 
\begin{equation}\label{eqxn:robust_perm_auto}
\frac{1}{\beta} < \liminf_{t \to \infty}{\tilde{\phi}_i(t)} \leq \limsup{\tilde{\phi}_i(t)} < \beta
\end{equation}
where $\tilde{x}(t)=(\tilde{x_1}(t), \tilde{x_2}(t), ...\tilde{x_n}(t))$ is a solution to (\ref{pert67}) with an initial condition satisfying $\tilde{x_i}(0)>0$.

In the proof of Theorem \ref{thm:main}, we showed that when condition (\ref{eqxn:persistence_inequality_pulsed}) is met, the autonomous system (\ref{auto},\ref{auto2}) generated by (\ref{eqxn:pulsed_model}) satisfies sufficient permanence conditions given in Theorem 1 in \cite{patel_schreiber_2018}. By Theorem 2 from the same paper, this implies that system (\ref{auto}, \ref{auto2}) is robustly permanent. That is, there is a $\delta , \beta> 0$ such that all $(\delta, Q_\phi)$-perturbations are permanent for a uniform choice of $\beta$. What remains to be shown is the robust permanence of (\ref{eqxn:pulsed_model}). 

For $(\tilde g, \tilde \Lambda)$ in such a ($\delta,Q_\phi$)-perturbation of (\ref{auto},\ref{auto2}), let 
    $$
    \tilde{\tau} = \frac{2\pi \tau}{(\tau + 1)}\frac{1}{\tilde{\Lambda}}, \; \tilde \Lambda \neq 0
    $$ and for $(x, \theta) \in \R^n_+ \times S^1$, $i = 1, \ldots, n$ let
    $$
    \tilde{g}_i(x,\theta) = \begin{cases}
        \tilde{f}_i(x), & \theta \in [0,\frac{2\pi \tau}{\tau + 1}]\\
        \frac{\tau}{\tilde{\tau}} \ln{(1+ \tilde{h}_i)}, & \theta \in [\frac{2\pi\tau}{\tau + 1} , 2\pi)
    \end{cases}
    $$.
    
    Notice that for $\tilde{\tau} > 0$, $\frac{2\pi}{\tau + 1}\frac{\tau}{\tilde \tau}$ is continuous with respect to $\tilde\tau$. So at the point $\tilde \tau = \tau$, there exists a $\delta_{\tau_1} > 0$ such that for all $|\tilde \tau - \tau| < \delta_{\tau_1}$, we have 
    $$
    |\frac{2\pi}{\tau + 1}\frac{\tau}{\tilde{\tau}} - \frac{2\pi}{\tau + 1}| < \delta
    $$. By similar argument, we can choose $\delta_{\tau_2}, \delta_{\tau_3}>0$ to bound 
    $$
    |\frac{\tau}{\tilde{\tau}} - 1| \leq \sqrt{\frac{\delta}{2}} -1
    $$
    for all $|\tilde{\tau} - \tau| < \delta_{\tau_2}$ and 
    $$
    |\frac{\tau}{\tilde{\tau}} - 1| \leq \frac{\delta}{2|\ln(1+h_i)|}
    $$
    for all $|\tilde{\tau} - \tau| < \delta_{\tau_3}$. For all $\tilde{h}_i > -1$, we can also choose $\delta_{h_1} > 0$ so that for all $|\tilde{h}_i -h_i| < \frac{\delta}{2}$
    $$
    |\ln{(1+\tilde{h}_i)} - \ln{(1+h_i)}| < \sqrt{\frac{\delta}{2}} 
    $$.

    Let $\tilde{\delta} = \min{\{\delta, \delta_{\tau_1}, \delta_{\tau_2}, \delta_{\tau_3}, \delta_h}\}$ and $\tilde \beta = \beta$. Then for all $i = 1, \ldots, n$ and $(x, \theta) \in \R^n_+ \times S^1$
    \begin{equation*}
        \begin{split}
            |\frac{\tau}{\tilde{\tau}}\ln{(1+\tilde{h}_i)-\ln{(1+h_i)}}| \leq |\frac{\tau}{\tilde{\tau}}\ln{(1+\tilde{h}_i)} - \frac{\tau}{\tilde{\tau}}\ln{(1+h_i)}| + |\frac{\tau}{\tilde{\tau}}\ln{(1+h_i)} - \ln{(1+h_i)}|\\
            \leq \sqrt{\frac{\delta}{2}}\sqrt{\frac{\delta}{2}} + |\frac{\tau}{\tilde{\tau}} - 1||\ln(1+h_i)| \\
            \leq \delta
        \end{split}
    \end{equation*}
    and also
     \begin{equation*}
        \begin{split}
            |\tilde{f}_i(x) - f_i(x)| < \delta \\
            |\frac{2\pi}{\tau + 1}\frac{\tau}{\tilde{\tau}} - \frac{2\pi}{\tau + 1}| < \delta
        \end{split}
    \end{equation*}
    whenever $|\tilde{f}_i(x)-f_i(x)| < \delta, \; |\tilde{\tau}-\tau| < \delta$, and $|\tilde{h}_i - h_i| < \delta$. Further, by the correspondence of (\ref{eqxn:pulsed_model}) and (\ref{auto},\ref{auto2}), if inequality (\ref{eqxn:robust_perm_auto}) holds, then for $i = 1, \ldots, n$, all solutions to (\ref{eqxn:perturbed_pulsed_model}) in the ($\tilde{ \delta}, Q$)-perturbation of (\ref{eqxn:pulsed_model}) with $\tilde{x}_i(0) > 0$ satisfy
    $$
        \frac{1}{\tilde \beta} < \liminf_{t \to \infty} \tilde{x}_i(t) < \limsup_{t \to \infty} \tilde{x}_i(t) < \tilde \beta
    $$ as well, where $\tilde{x}$ is a solution to (\ref{eqxn:perturbed_pulsed_model}).

    So robust permanence of  (\ref{auto},\ref{auto2}) implies robust permanence of (\ref{eqxn:pulsed_model}), and the conditions on (\ref{eqxn:pulsed_model}) in Theorem \ref{thm:main} are sufficient for robust permanence of (\ref{eqxn:pulsed_model}).
    
\end{proof}

\bibliographystyle{plainnat}
\bibliography{ref}

\end{document}